\tikzstyle{overbrace text style}=[font=\tiny, above, pos=.5, yshift=5pt]
\tikzstyle{overbrace style}=[decorate,decoration={brace,raise=5pt,amplitude=3pt}]
\definecolor{cadmiumgreen}{rgb}{0.0, 0.42, 0.24}
\newtheorem{theorem}{Theorem}
\newtheorem{lemma}{Lemma}
\newcommand{\BAG}{\mathsf{BAG}}
\newcommand{\opt}{\mathsf{OPT}}
\newcommand{\sw}{\mathsf{SW}}
\newcommand{\PoF}{\mathsf{PoF}}
\newcommand{\GMMS}{\mathsf{GMMS}}
\theoremstyle{definition}
\newtheorem{open}{Open Problem}
\newtheorem{defn}{Definition}
\newtheorem{ex}{Example}
\newcommand{\newtext}[1]{\textcolor{black}{#1}}
\begin{document}

\allowdisplaybreaks

\title{\bf Fair Division of Indivisible Goods: \\ Recent Progress and Open Questions\thanks{Preliminary versions of this survey appeared as \citep{fair-survey-ABFV22} and \citep{fair-survey-ALMW22}.}}

\author[1]{Georgios Amanatidis}
\author[2]{Haris Aziz}
\author[3]{Georgios Birmpas}
\author[4]{Aris Filos-Ratsikas}
\author[5]{Bo Li}
\author[6]{Herv\'e Moulin}
\author[1]{Alexandros A. Voudouris}
\author[7]{Xiaowei Wu}

\affil[1]{University of Essex}
\affil[2]{UNSW Sydney}
\affil[3]{Sapienza University of Rome}
\affil[4]{University of Edinburgh}
\affil[5]{The Hong Kong Polytechnic University}
\affil[6]{University of Glasgow } %and Higher School of Economics, St. Petersburg
\affil[7]{University of Macau}

\date{}

\maketitle

%%%%%%%%%%%%%%%%%%%%%%%%%%%
\begin{abstract}
Allocating resources to individuals in a \emph{fair} manner has been a topic of interest since ancient times, with most of the early mathematical work on the problem focusing on resources that are infinitely divisible. Over the last decade, there has been a surge of papers studying computational questions regarding the \emph{indivisible} case, for which exact fairness notions such as envy-freeness and proportionality are hard to satisfy. One main theme in the recent research agenda is to investigate the extent to which their relaxations, like maximin share fairness (MMS) and envy-freeness up to any good (EFX), can be achieved. In this survey, we present a comprehensive review of the \newtext{recent} progress made in the related literature by highlighting different ways to relax fairness notions, common algorithm design techniques, and the most interesting questions for future research.
\end{abstract}
%%%%%%%%%%%%%%%%%%%%%%%%%%%

%%%%%%%%%%%%%%%%%%%%%%%%%%%
\section{Introduction}
%%%%%%%%%%%%%%%%%%%%%%%%%%%

Fair division is concerned with the fundamental task of \emph{fairly} partitioning or allocating a set of resources to people with diverse and heterogeneous preferences over these resources.
Some examples of fair division in the real world include Course Match, which is employed for course allocation at the Wharton School in the University of Pennsylvania, and the websites Spliddit (\url{spliddit.org}) and Fair-Outcomes (\url{fairoutcomes.com}), that provide implementations of fair allocation algorithms for various allocation problems, such as sharing rent among roommates, splitting taxi fares, and for assigning goods to a set of individuals (which is the focus of this survey).
The associated theory originated in the works of \citet{Steinhaus49}, Banach, and Knaster (see \citep{dubins1961cut}), and has been in the focus of economics, ma\-the\-ma\-tics and computer science for the better part of the last century \citep{books/daglib/0017734}. Most of the classic work on the problem has been devoted to the fair division of \emph{infinitely divisible} resources, where  ``fair'' here may have different interpretations, with two predominant ones being \emph{proportionality} \mbox{\citep{Steinhaus49}} and \emph{envy-freeness} \citep{GS58,Varian74}; \newtext{see also \citep{Procaccia_cake_16,LindnerR16}.}
In a recent breakthrough, \citet{conf/focs/AzizM16} showed that an envy-free allocation of divisible resources (which is also proportional in the standard additive case) can always be found in a finite number of steps. 

Compared to the divisible setting, the fair division of \emph{indivisible} resources, referred to as \emph{discrete fair division}, turns out to be inherently more challenging. Indeed, no reasonable fair solution can be guaranteed in some cases; for example, when there is a single valuable item, no matter who gets it, the allocation is not fair to the others. A typical remedy to this situation is to employ ran\-do\-mi\-zation, and aim for fairness (such as envy-freeness) in expectation \citep{hylland1979efficient,journals/jet/BogomolnaiaM01}. 
 A fundamentally different approach to discrete fair division came via the introduction of appropriate \emph{relaxations} of envy-freeness and proportionality, originating in the works of \citet{LMMS04}, \citet{Budish11}, \citet{CaragiannisKMPS19}, and \citet{GourvesMT14}, which are geared to escape such adverse examples. The main notions that were introduced in this literature are \emph{envy-freeness up to one good} (EF1), \emph{envy-freeness up to any good} (EFX) and \emph{maximin share fairness} (MMS). Since then, work on the topic has flourished, centered around fundamental questions about the existence and the efficient computation of allocations satisfying these or other related fairness criteria. 

\newtext{We remark that the problem of fairly dividing indivisible goods has a very long history, and our aim in this survey is not to provide an extensive review of all of this literature. Instead, our focus is on the recent developments over the past decade, with an emphasis on the algorithmic aspects of the associated problems. For other results,} there are several surveys highlighting different perspectives of the theory of fair resource allocation. 
\citet{moulin2018fair} provides a review from an economics perspective,  
\citet{aleksandrov2020onlinesurvey} and \citet{suksompong2021constraints} focus on online and constrained settings, respectively, \newtext{whereas \citet{conf/ijcai/Walsh20} and
\citet{conf/aaai/Aziz20} provide short surveys  targeted more towards a broader audience.
\citet{LangR16}, \citet{BouveretCM16}, and \citet{Markakis17-survey} consider general discrete fair division settings, and thus are more closely related to this work. 
Our survey differs from these in that we pay particular attention on algorithmic progress made in the last decade, focusing on scenarios where it is assumed that the agents are equipped with valuation functions (most often additive ones) on the sets of items.}
Over this period of time, discrete fair division has been at the epicenter of computational fair division, for several different fairness notions and a variety of different settings. In this survey, we highlight the main contributions of this literature, the most significant variants of the main setting, common algorithmic techniques, as well as some of the major open problems in the area. \newtext{ Although they lack most of this content, the surveys of \citet{LangR16}, and \citet{BouveretCM16} remain  excellent starting points on questions of preference elicitation and compact representation.}

\paragraph{Roadmap.}
The rest of the survey is organized as follows. 
We first introduce the problem of discrete fair division and the main fairness notions in Section~\ref{sec:model}.
We then discuss results on EF1 in Section~\ref{sec:EF1}, EFX in Section~\ref{sec:EFX}, and MMS in Section~\ref{sec:MMS}.
In Section~\ref{sec:other-notions} we introduce other notable fairness notions for the main setting. In Section \ref{sec:efficiency&truthfulness}, we talk about two more properties, efficiency and truthfulness, that are usually desirable together with fairness.
Finally, Section \ref{sec:other_settings} is dedicated to other settings, including limited information, general valuations, randomness, and more.

\section{The Setting}
\label{sec:model}
For the general discrete fair division problem we consider here, there is a set $N$ of $n$ {\em agents} and a set $M$ of $m$ {\em goods} that cannot be divided or shared. Each agent $i\in N$ is equipped with a {\em valuation function} $v_i:2^M \rightarrow \mathbb{R}_{\geq 0}$, which assigns a non-negative real number to each possible subset of items and is \emph{normalized} and \emph{monotone}, i.e., $v_i(\varnothing) = 0$ and $v_i(S) \le v_i(T)$ for all $S \subseteq T \subseteq M$. In this survey, we mostly focus on the case where the valuation function of each agent $i$ is also assumed to be {\em additive}, so that $v_i(S) = \sum_{g \in S} v_i(g)$ for any subset of items $S \subseteq M$, where $v_i(g)$ is used as a shorthand for $v_i(\{g\})$. Other types of valuation functions have also been studied and are briefly discussed but, unless otherwise specified, in what follows, we refer to the additive case. \newtext{For the sake of readability, we avoid defining all different function classes mentioned here; we define submodular, XOS, and subadditive valuation functions in Section \ref{ssec:general-valuations}, and we refer the interested reader to the corresponding references for the others.}
A fair allocation instance is denoted by $I=(N,M,v)$ where $v=(v_1,\ldots,v_n)$ is the vector of valuation functions and can be represented by a table with a row per agent and a column per good, such that cell $(i,g)$ contains the value $v_i(g)$. 
An \emph{allocation} is a tuple of subsets of $M$, $A=(A_1,\ldots,A_n)$, such that each agent $i \in N$ receives the \emph{bundle}, $A_i \subseteq M$, $A_i \cap A_j = \varnothing$ for every pair of agents $i,j \in N$, and $\bigcup_{i \in N}A_i = M$. If $\bigcup_{i \in N}A_i \subsetneq M$, the allocation is called {\em partial}.

\subsection{Solution Concepts}
The objective is to compute a {\em fair} allocation, i.e., an allocation that satisfies a desired fairness criterion. 
As already mentioned, since the early fair division literature, there are two predominant fairness notions, namely {\em envy-freeness} and {\em proportionality}. An allocation is said to be envy-free if no agent believes that another agent was given a better bundle; note that envy-freeness depends on pairwise comparisons. 

\begin{defn}[Envy-freeness]
\label{def:ef}
An allocation $A$ is {\em envy-free} (EF) if $v_i(A_i) \geq v_i(A_j)$ for every pair of agents $i, j \in N$.
\end{defn}

\noindent
On the other hand, an allocation is said to be proportional if each agent is guaranteed her \emph{proportional share} in terms of the total value, independently of what others get. 

\begin{defn}[Proportionality]
\label{def:prop}
An allocation $A$ is {\em proportional} (PROP) if $v_i(A_i) \geq 
{v_i(M)}/{n}$ 
for every agent $i \in N$.
\end{defn}

\noindent 
It is not hard to see that an EF allocation is also PROP, but the converse is not necessarily true.

\begin{ex} \label{ex}
Consider an instance with three agents and four goods. The values of the agents for the goods are given in Table \ref{table:def:instance}. 

\begin{table}[ht]
    \centering
    \begin{tabular}{c|cccc}
                & $g_1$ & $g_2$ & $g_3$ & $g_4$\\
        \hline
        $a_1$   & 10    & 6     & 6     & 8  \\
        $a_2$   & 10    & 5     & 5     & 10 \\
        $a_3$   & 10    & 0     & 10    & 10 \\
        \hline
    \end{tabular}
    \caption{The values of the agents in the instance considered in Example~\ref{ex}. In particular, the number in the cell defined by row $i \in \{a_1,a_2,a_3\}$ and column $g \in \{g_1, g_2, g_3, g_4\}$ is the value $v_i(g)$ of agent $i$ for good $g$.}
    \label{table:def:instance}
\end{table}

The allocation $A=\{A_1,A_2,A_3\}$ with $A_1 = \{g_2,g_3\}$, $A_2 = \{g_4\}$ and $A_3 = \{g_1\}$ is EF. In particular, agent $a_1$ does not envy any other agent as $v_1(A_1) = 12 > v_1(A_2)=8$ and $v_1(A_1) = 12 > v_1(A_3)=10$, agent $a_2$ does not envy as $v_2(A_2) = 10 = v_2(A_1) = v_2(A_3)$, and agent $a_3$ does not envy as $v_3(A_3) = 10 = v_3(A_1) = v_3(A_2)$. 

On the other hand, the allocation $B=\{B_1,B_2,B_3\}$ with $B_1=\{g_1\}$, $B_2 = \{g_2,g_3\}$ and $B_3 = \{g_4\}$ is not EF, but it is PROP. Indeed, agent $a_1$ envies agent $2$: $v_1(B_1) = 10 < v_1(B_2) =12$. However, $v_i(B_i) \ge v_i(M)/3 = 10$ for every agent $a_i$, and thus each agent obtains her proportional share.
\hfill $\qed$
\end{ex}

As mentioned above, EF and PROP allocations do not always exist when allocating indivisible items. The simplest example is the case of two agents and a single good that is positively valued by both agents. Since only one of the agents receives the good, the other agent gets zero value, she envies the agent with the item, and does not achieve her proportional share. See also Example~\ref{ex:ef1} for another instance that does not admit any EF or PROP allocation.
Despite this impossibility, one could still be interested in finding EF or PROP allocations \emph{when} they exist. Unfortunately, it turns out that the problem of even deciding whether an instance admits an EF or PROP allocation is NP-complete, which can be shown via a simple reduction from \textsc{Partition} \citep{LMMS04}. 
These straightforward impossibility results have led to the definition of multiple relaxations of these two notions, tailored for discrete fair division. 

\subsection{Important Relaxations}
The first relaxation of envy-freeness is {\em envy-freeness up to one good} (EF1), implicitly introduced by \citet{LMMS04}, but formally defined by \citet{Budish11}. According to EF1 it is acceptable for an agent $i$ to envy another agent $j$, as long as there exists a good in $j$'s bundle the hypothetical removal of which would eliminate $i$'s envy towards $j$.

\begin{defn}[EF1] \label{def:ef1}
An allocation $A$ is {\em envy-free up to one good (EF1)} if, for every pair of agents $i,j \in N$, it holds that $v_i(A_i) \geq v_i(A_j \setminus \{g\})$ for some $g \in A_j$.
\end{defn}

\begin{ex} \label{ex:ef1}
To demonstrate the notion of EF1 (as well as EFX and MMS later on), let us consider a simple example with three agents and five goods. The values of the agents for the goods are given in Table \ref{table:example:EF1}.

\begin{table}[ht]
    \centering
\begin{tabular}{c|ccccc}
        & $g_1$ & $g_2$ & $g_3$ & $g_4$ & $g_5$ \\\hline
$a_1$   &   15  & 3     & 2     &  2    &  6\\
$a_2$   &   7   & 5     & 5     &  5    &  7\\
$a_3$   &   20  & 3     & 3     &  3    &  3\\\hline
\end{tabular}
\caption{The instance considered in Examples~\ref{ex:ef1}, \ref{ex:efx}, and \ref{ex:mms}.}
\label{table:example:EF1}
\end{table}
This instance does not admit any EF or PROP allocations.  To see this, observe that in any PROP allocation, agent $a_3$ must get at least $\{g_1\}$ or $\{g_2, g_3, g_4, g_5\}$. In the latter case, at least one of $a_1$ and $a_2$ will get no goods, whereas in the former case $a_1$ must get at least three of the remaining four goods and $a_2$ must get at least two, which is not possible.
On the other hand, note that the allocation $A_1 = \{g_3, g_4\}$, $A_2 = \{g_2, g_5\}$,  $A_3 = \{g_1\}$ is EF1: $a_2$ and $a_3$ are not envious, and the envy of $a_1$ towards $a_2$ and $a_3$ can be eliminated by the hypothetical removals of $g_5$ from $A_2$ and $g_1$ from $A_3$, respectively. 
\hfill $\qed$
\end{ex}

As we will see, EF1 is easy to achieve, even when the valuation functions are general monotone. 
However, in many cases it is a fairly weak fairness notion; an EF1 allocation is considered to be fair for an agent even when a very highly-valued good is hypothetically removed from another agent's bundle (e.g., a house or an expensive car). For example, consider agent $a_1$'s perspective of the allocation in Example \ref{ex:ef1}, where the proposed EF1 solution requires the removal of rather valuable goods for the agent. A very natural refinement of the notion is the stricter relaxation of {\em envy-freeness up to any good (EFX)} that was introduced in 2016 by \citet{CaragiannisKMPS19} in the conference version of their work but also somewhat earlier by \citet{GourvesMT14} under the name {\em near envy-freeness}. An allocation is said to be EFX if the envy of an agent $i$ towards another agent $j$ can be eliminated by the hypothetical removal of {\em any} good in $j$'s bundle. 

\begin{defn}[EFX] \label{def:efx}
An allocation $A$ is {\em envy-free up to any good (EFX)} if, for every pair of agents $i,j \in N$, it holds that $v_i(A_i) \geq v_i(A_j \setminus \{g\}$) for any $g\in A_j$.
%such that $v_i(g)>0$.
\end{defn}

\begin{ex} \label{ex:efx}
Consider the instance of Example~\ref{ex:ef1} again. The allocation $A_1 = \{g_3, g_4\}$, $A_2 = \{g_2, g_5\}$,  $A_3 = \{g_1\}$ is not EFX since the envy of $a_1$ towards $a_2$ cannot be eliminated by removing $g_2$ ($a_1$'s least favorite good in $A_2$) from $A_2$. Nevertheless, it is easy to modify this allocation to get $B_1 = \{g_4, g_5\}$, $B_2 = \{g_2, g_3\}$, $B_3 = \{g_1\}$ that is EFX. Indeed, the envy of $a_1$ towards $a_3$ can be eliminated by removing $g_1$ from $B_3$, whereas the envy of $a_2$ towards $a_1$ can be eliminated by removing $g_4$ from $B_1$; in both cases, the hypothetical removal involves the envious agent's least valued good in the other agent's bundle.
\hfill $\qed$
\end{ex}

In contrast to EF1, the existence of EFX allocations is a challenging open problem.  In fact, \cite{procaccia2020technical} referred to this as ``fair division's most enigmatic question''.
In the past few years, a sequence of works have partially or approximately answered this question; see Section \ref{sec:EFX}. 

Besides the two additive relaxations of envy-freeness discussed so far, an extensively studied fairness notion in discrete fair division is {\em maximin share fairness}, also introduced by \cite{Budish11}. The notion can be seen as a generalization of the rationale of the well-known cut-and-choose protocol, which is known to guarantee an envy-free partition of a divisible resource among two agents. 
Here, the goal is to give each agent $i$ goods of value at least as much as her \emph{maximin share} $\mu^n_i(M)$, which is the maximum value this agent could guarantee for herself by partitioning the set of goods $M$ into $n$ disjoint bundles and keeping the worst of them. As such, it is a relaxation of proportionality. 

\begin{defn}[MMS]\label{def:mms}
Let $\mathcal{A}_n(M)$ be the collection of possible allocations of the goods in $M$ to $n$ agents. An allocation $A$ is said to be {\em maximin share fair (MMS)} if for each agent $i \in N$,
$$v_i(A_i) \geq \mu^n_i(M) = \!\!\displaystyle\max_{B \in \mathcal{A}_n(M)} \min_{S \in B} v_i(S).$$
\end{defn}

\noindent
When $M$ is clear from context, we may simple write $\mu_i^n$ for $\mu_i^n(M)$.

\begin{ex} \label{ex:mms}
Returning to the instance of Example~\ref{ex:ef1}, we can see that $\mu^3_1(M) = 6$, since it is not possible to partition the items into three sets with strictly more value, but $6$ is guaranteed by the partition $\{g_1\}$, $\{g_2, g_3, g_4\}$, $\{g_5\}$. Similarly, $\mu^3_2(M) = 7$ and $\mu^3_3(M) = 6$.
Therefore, $B_1 = \{g_4, g_5\}$, $B_2 = \{g_2, g_3\}$,  $B_3 = \{g_1\}$, from Example \ref{ex:efx}, is an MMS allocation, but $A_1 = \{g_3, g_4\}$, $A_2 = \{g_2, g_5\}$,  $A_3 = \{g_1\}$, from Example \ref{ex:ef1}, is not as agent $a_1$ gets a bundle of value only $4 = 2/3 \cdot \mu^3_1(M)$.
\hfill $\qed$
\end{ex}

While it is relatively easy to see that computing MMS allocations or even computing the maximin share of an agent is an NP-hard problem using a reduction from {\sc Partition}, there is a PTAS for the latter task \citep{Woeginger97}. Since the introduction of the notion, the existence of MMS allocations was a very intriguing open problem. This was eventually answered in the negative by \citet{KurokawaPW18,KPW16}, who proved that MMS allocations do not always exist
%The first breakthrough about MMS was that allocations that guarantee it do not always exist 
when there are more than two agents. %~\citep{KurokawaPW18,KPW16},
Still, it is possible to compute \emph{approximate} MMS allocations; see Section \ref{sec:MMS}.

\section{Envy-Freeness up to One Good (EF1)}
\label{sec:EF1}

We start our discussion by presenting important results about EF1 allocations.

\subsection{Computing EF1 Allocations}
There are several simple, polynomial-time algorithms for computing EF1 allocations. The simplest one is Round-Robin (Algorithm \ref{alg:round-robin}), which allocates the goods to the agents in multiple rounds. In each round, according to some arbitrarily fixed given ordering, each agent chooses her most valuable good among the available ones (that is, from the set of goods that have not been chosen by the agent's turn). It is not hard to see that Round-Robin leads to an EF1 allocation~\citep{CaragiannisKMPS19}.

\begin{algorithm}[ht]
\caption{Round-Robin}
\label{alg:round-robin}
\textbf{Input: }{A fair allocation instance $I = (N,M,v)$ with $n$ agents and $m$ goods.}\\
\textbf{Output: }{Allocation $A=(A_1,\ldots,A_n)$.} \\
\For{each agent $i \in N$}{
    $A_i \gets \varnothing$\; 
}
\For{$\ell = 1, \ldots, m$}{
    Let $i \gets \ell \!\mod n$\; 
    Let $g^* \in \arg \max_{g \in M} v_i(g)$\; 
    $A_i \gets A_i \cup \{g^*\}$\; 
    $M \gets M\setminus\{g^*\}$\;
}  
\end{algorithm}

\begin{theorem}[\cite{CaragiannisKMPS19}]
Round-Robin computes an EF1 allocation.
\end{theorem}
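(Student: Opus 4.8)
The plan is to show that the allocation produced by Round-Robin satisfies the EF1 condition for every ordered pair of agents. Fix the arbitrary agent ordering used by the algorithm, say $1, 2, \ldots, n$, and consider two agents $i$ and $j$. The key observation is that the entire run of the algorithm can be partitioned into $n$-long blocks of picks (one complete round each), and within each round the agents pick in the fixed order. I would compare the good that $i$ picks in a given round with the good that $j$ picks in the \emph{same} round, being careful about whether $i$ comes before or after $j$ in the ordering.

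First I would handle the easy case: if $i$ precedes $j$ in the ordering, then in every round agent $i$ picks before agent $j$, so at the moment $i$ picks she is choosing a most-valued good among a superset of the goods still available when $j$ picks later in that round; hence $v_i(\text{$i$'s round-$t$ good}) \ge v_i(\text{$j$'s round-$t$ good})$ for each round $t$. Summing over all rounds and using additivity of $v_i$ gives $v_i(A_i) \ge v_i(A_j)$, so $i$ does not envy $j$ at all, which certainly implies the EF1 bound.

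The main case — and the only place any real care is needed — is when $j$ precedes $i$. Now in each round $j$ picks before $i$, so the inequality can fail for that round's pair. The trick is to shift the indexing by one: compare the good $j$ picks in round $t$ with the good $i$ picks in round $t-1$. For $t \ge 2$, when $i$ picked her round-$(t-1)$ good she was choosing greedily from a set that still contained $j$'s round-$t$ good (since $j$ picks that good only later, in round $t$), so $v_i(\text{$i$'s round-$(t-1)$ good}) \ge v_i(\text{$j$'s round-$t$ good})$. Let $g$ be the very first good $j$ receives (her round-$1$ pick). Summing the shifted inequalities over $t = 2, 3, \ldots$ and invoking additivity yields $v_i(A_i) \ge v_i(A_j \setminus \{g\})$, which is exactly EF1. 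One should also note the boundary subtleties: some agents may receive one fewer good than others in the final round (when $n \nmid m$), but this only helps — it removes a term from $j$'s side — and the argument goes through; similarly the definition of "round" for the last partial round must be read correctly, which I would state explicitly. I don't expect a genuine obstacle here; the only thing to get right is the off-by-one alignment of rounds and the bookkeeping for the incomplete final round.
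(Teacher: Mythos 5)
Your proof is correct and is essentially the paper's argument: the paper also observes that the earlier-ordered agent never envies the later one, and handles the reverse direction by dropping the earlier agent's first pick and viewing the remainder as a fresh run with the roles swapped — which is exactly your index-shifted, round-by-round comparison made explicit. Your additional bookkeeping for the incomplete final round is a harmless elaboration of the same idea.
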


\begin{proof}
Consider two agents $i$ and $j$, such that $i$ chooses before $j$ according to the given fixed ordering. As $i$ has the chance to pick a good before $j$ in every single round of the algorithm, $i$ cannot envy $j$. Of course, agent $j$ may envy agent $i$. Let $g$ be the first good chosen by $i$. From that point on, we can consider the execution of the algorithm on the remaining goods as a fresh run where now $j$ has the chance to pick a good before $i$ in every round. So, $j$ does not envy $i$'s bundle after the removal of good $g$ from it.
\end{proof}

In the basic definition of Round-Robin, the agents follow the same order in each round to select the goods. However, to compute an EF1 allocation, this is not necessary; as long as each agent selects her favorite good when it is her turn, the order of the agents in different rounds can be different. Essentially, Round-Robin is only a member of a larger class of algorithms that compute EF1 allocations by using \textit{recursively balanced} sequences in different rounds, where the difference between the number of turns of any two agents is at most $1$. 

Another algorithm for computing an EF1 allocation is the Envy-Cycle Elimination (Algorithm~\ref{alg:envy-cycle}), introduced by \citet{LMMS04}. In contrast to Round-Robin or any other sequential allocation algorithm, Envy-Cycle Elimination does not use a prefixed sequence for agents to select goods. Instead, it repeatedly chooses an agent that is in a disadvantage compared to other agents and gives an unallocated good. In the variant we present in Algorithm~\ref{alg:envy-cycle}, whenever an agent gets a new good, this is her favorite available one.
The algorithm maintains an {\em envy graph}, where the nodes correspond to agents and there is an edge from agent $i$ to agent $j$ if $i$ is envious of $j$'s bundle. At each step of the algorithm, an unassigned good is allocated to some agent who is not envied by any other agent, i.e., an agent that corresponds to a node with in-degree $0$ in the envy graph. 
If no such agent exists, the envy graph must contain a directed cycle, which can be eliminated by redistributing the current bundles among the agents that participate in the cycle. Formally, let $C=(i_1,\ldots, i_d)$ be a directed cycle in the envy graph such that $i_j$ envies $i_{j+1}$ for each $j \in [d-1]$, and $i_d$ envies $i_1$. The cycle can be resolved by exchanging the bundles of items along the cycle: each agent in the cycle gets the bundle of the agent she points to (Equation \eqref{eq:envy-cycle} in the description of Algorithm~\ref{alg:envy-cycle}). Repeating this procedure, eventually, leads to a modified envy graph with at least one agent who is not envied by any other agent. The algorithm terminates when all items are allocated.

\begin{algorithm}[ht]
\caption{Envy-Cycle Elimination}
\label{alg:envy-cycle}
\textbf{Input: }{A fair allocation instance $I = (N,M,v)$ with $n$ agents and $m$ goods.}\\
\textbf{Output: }{Allocation $A=(A_1,\ldots,A_n)$.} \\

\For{each agent $i \in N$}{
    $A_i \gets \varnothing$\; 
}

\For{$\ell = 1,\ldots, m$}{
    \While{there does not exist an unenvied agent}
    {
        Find an envy-cycle $C = (i_1,\ldots, i_d)$ and resolve the cycle as follows:
        \begin{align} \label{eq:envy-cycle}
        A^C_{i_j} = 
        \begin{cases}
            A_{i_{j+1}} & \text{ for all $1 \le j \leq d-1$}\\
            A_{i_1} & \text{ for $j = d$}\;
        \end{cases} 
        \end{align}
        $A_{i} \gets A^C_{i}$ for all $i \in C$\;
    }
    
    Let $i$ be an unenvied agent\;
    Let $g^* \in \arg \max_{g \in M} \{v_i(g)\}$\label{line:most-valuable-item}\;
    $A_i \gets A_i \cup \{g^*\}$\;
    $M \gets M\setminus\{g^*\}$\;
    }
\end{algorithm}

\citet{LMMS04} proved that Envy-Cycle Elimination runs in polynomial time and outputs an EF1 allocation. Indeed, in each round of the algorithm, an agent receives a good only if she is not envied by any other agent. Therefore, by removing the last good an agent receives eliminates any possible envy of other agents towards her. Note that to ensure EF1, we can assign any good to agent $i$ (instead of the item $g^*$ with maximum value) in line~\ref{line:most-valuable-item}. However, as we will see later, by assigning the most valuable good, we can ensure other nice properties regarding EFX and MMS.

\subsection{EF1 and Pareto Optimal (PO) Allocations} \label{ssec:EF1-and-PO}
Besides computing a fair allocation, another natural requirement is that of efficiency. 
\citet{CaragiannisKMPS19} identified an interesting inherent connection between EF1, Pareto optimality, and the notion of {\em maximum Nash welfare (MNW)}.

\begin{defn} \label{def:po}
An allocation $A$ is said to be Pareto optimal (PO) if there is no allocation $B$ such that $v_i(B_i) \ge v_i(A_i)$ for all $i\in N$ and $v_j(B_j) > v_j(A_j)$ for some $j\in N$. Equivalently, we will say that such an allocation is not \emph{Pareto dominated} by any other allocation.
\end{defn}

\begin{defn}[MNW allocation] \label{def:mnw}
An allocation $A$ is said to be a {\em maximum Nash welfare (MNW) allocation} if 
(a) it maximizes the number of agents with positive value, and
(b) it maximizes the {\em Nash welfare}, defined as the product of values $\prod_i v_i(A_i)$ for the agents with positive value.\footnote{Defining the Nash welfare as the product of values is a simplification. Typically, the Nash welfare of an allocation $A$ is defined as the geometric mean of the values of the agents, that is, $\left( \prod_{i\in N} v_i(A_i) \right)^{1/n}$. Note that exactly maximizing the geometric mean of values is equivalent to exactly maximizing the product of values, but this is not true when discussing about approximation.}
\end{defn}

\begin{theorem}[\citet{CaragiannisKMPS19}] \label{thm:MNW:EF1+PO}
Every MNW allocation is EF1 and PO.
\end{theorem}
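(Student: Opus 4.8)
The plan is to prove the two properties separately, treating Pareto optimality as an essentially immediate consequence of the definition and devoting the real work to EF1. For Pareto optimality, suppose an MNW allocation $A$ is Pareto dominated by some allocation $B$, so that $v_i(B_i) \ge v_i(A_i)$ for all $i \in N$ with strict inequality for some agent $k$. Since no value decreases, the set of agents with positive value under $B$ contains that under $A$. If it contains it strictly, then $B$ has strictly more agents with positive value, contradicting clause~(a) of the definition of an MNW allocation; otherwise the two sets coincide, and over this common set $B$ has every factor at least as large and the factor of $k$ strictly larger, so $B$ has strictly larger Nash welfare, contradicting clause~(b). Hence no such $B$ exists and $A$ is PO.

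For EF1, I would first settle the ``full support'' case where $v_i(A_i) > 0$ for every agent $i$, and then reduce the general case to it. In the full support case, suppose for contradiction that some agent $i$ envies some $j$ even after the removal of any single good, i.e.\ $v_i(A_i) < v_i(A_j \setminus \{g\})$ for every $g \in A_j$; since then $v_i(A_j) > v_i(A_i) \ge 0$, the bundle $A_j$ contains at least one good that $i$ values positively. If some such good $g$ also has $v_j(g) = 0$, moving $g$ from $A_j$ to $A_i$ strictly increases $v_i(A_i)$, leaves $v_j(A_j)$ unchanged (so full support is preserved), and strictly increases the Nash welfare---a contradiction. Otherwise every good of $A_j$ that $i$ values is also valued by $j$; I would pick $g^\star \in A_j$ minimizing the ratio $v_j(g)/v_i(g)$ over such goods and again transfer it to $A_i$. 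Minimality of the ratio gives $v_i(A_j) \le (v_i(g^\star)/v_j(g^\star)) \cdot v_j(A_j)$, which together with the assumed inequality $v_i(A_i) < v_i(A_j) - v_i(g^\star)$ yields, after a short manipulation, both $v_j(A_j) > v_j(g^\star)$ (so $j$ retains positive value) and $(v_i(A_i)+v_i(g^\star))(v_j(A_j)-v_j(g^\star)) > v_i(A_i)\,v_j(A_j)$. Thus the transfer preserves full support and strictly increases the Nash welfare, contradicting that $A$ is MNW; so EF1 holds when all agents have positive value. I expect this inequality argument to be the crux of the proof---in particular, the reason the case $v_j(g)=0$ must be dispatched first is precisely to guarantee that the ratio-minimizing transfer does not drive $v_j(A_j)$ to zero.

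To reduce the general case, let $\mathcal{N} = \{i \in N : v_i(A_i) > 0\}$. First, for $i \notin \mathcal{N}$ and any $j$, the bundle $A_j$ contains at most one good that $i$ values: if it contained two, one could move one of them to $i$ and create a new positive-value agent without destroying any existing one (checking the two ways in which $j$'s value might drop to zero), contradicting clause~(a); this immediately gives EF1 for every pair $(i,j)$ with $i \notin \mathcal{N}$. For a pair with $i \in \mathcal{N}$ and $j \notin \mathcal{N}$, additivity forces $v_j(g)=0$ for all $g \in A_j$, so envy beyond EF1 would let us move a good $i$ values from $A_j$ to $A_i$, strictly increasing the product while leaving the set of positive-value agents unchanged---a contradiction. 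Finally, for a pair $i,j \in \mathcal{N}$, one checks that the restriction of $A$ to the agents of $\mathcal{N}$ and the goods $\bigcup_{i \in \mathcal{N}} A_i$ is itself an MNW allocation of that sub-instance (it attains the largest possible number of positive-value agents, and any product-improving reallocation there would lift to one of the original instance keeping the same number of positive-value agents); since that sub-instance has full support, the case already handled gives EF1 for $i$ and $j$. The general-case reduction is conceptually routine but requires careful bookkeeping to rule out, in each sub-case, a swap that would increase either the number of positive-value agents or the Nash welfare.
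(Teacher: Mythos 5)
Your proof is correct and rests on the same core argument as the paper's: transfer the good $g^\star \in A_j$ minimizing the ratio $v_j(g)/v_i(g)$ to agent $i$, and use the mediant-type inequality $v_j(g^\star)/v_i(g^\star) \le v_j(A_j)/v_i(A_j)$ together with the failure of EF1 to show that $v_i(A_i)\,v_j(A_j)$ strictly increases. The only difference is your explicit treatment of zero-value agents; the paper's proof silently assumes $v_i(A_i)>0$ and $v_j(A_j)>0$ (it divides by both) and does not verify that the transfer preserves the number of agents with positive value, so your additional case analysis closes small gaps in the published argument rather than taking a genuinely different route.
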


\begin{proof}
It is straightforward to show that every MNW allocation $A$ is PO. If there were a different allocation $B$ with $v_i(B)\geq v_i(A)$ for all $i\in N$ and $v_j(B_j) > v_j(A_j)$ for some $j\in N$ it would have strictly larger Nash welfare compared to that of $A$, which would contradict that $A$ is an MNW allocation.

To show that $A$ is EF1, suppose to the contrary that there exist agents $i$ and $j$ such that $v_i(A_i) < v_i(A_j \setminus \{ g \} )$ for every $g\in A_j$. 
We will show that there exists another allocation $A'$ with Nash welfare that is strictly larger than that of $A$. 
\newtext{Since agent $i$ envies agent $j$, there exists a $g^* = \arg\min_{g\in A_j, v_i(g)>0} \left\{ \frac{v_j(g)}{v_i(g)} \right\}$. Using this, we define the allocation $A'$ with $A'_\ell= A_\ell$ for each $\ell \in N\setminus \{i,j\}$, $A'_i = A_i \cup \{g^*\}$ and $A'_j = A_j\setminus \{g^*\}$.}
To show that the Nash welfare of $A'$ is strictly larger, it suffices to show that
\begin{align*}
    v_i(A'_i) \cdot v_j(A'_j) > v_i(A_i)\cdot v_j(A_j).
\end{align*}

By the definition of $g^*$ we have $\frac{v_j(g^*)}{v_i(g^*)} \leq \frac{v_j(A_j)}{v_i(A_j)}$, which implies
\begin{align} \label{eq:non-EF1}
    \frac{v_j(g^*)}{v_j(A_j)} \leq \frac{v_i(g^*)}{v_i(A_j)} < \frac{v_i(g^*)}{v_i(A_i) + v_i(g^*)},
\end{align}
where the last inequality follows from the fact that agent $i$ envies agent $j$ even after the removal of any item in $A_j$.
Therefore, we have
\begin{align*}
    v_i(A'_i) \cdot v_j(A'_j) & =
    \left( v_i(A_i) + v_i(g^*) \right)\cdot \left( v_j(A_j) - v_j(g^*) \right) \\
    & = \left( 1 + \frac{v_i(g^*)}{v_i(A_i)} \right)\cdot \left( 1 - \frac{v_j(g^*)}{v_j(A_j)} \right)\cdot v_i(A_i)\cdot v_j(A_j) \\
    & > \frac{v_i(A_i) + v_i(g^*)}{v_i(A_i)} \cdot \left( 1 - \frac{v_i(g^*)}{v_i(A_i) + v_i(g^*)} \right)\cdot v_i(A_i)\cdot v_j(A_j) \\
    & = v_i(A_i)\cdot v_j(A_j),
\end{align*}
where the first inequality follows from Inequality~\eqref{eq:non-EF1}.
Hence, the Nash welfare of $A'$ is strictly larger than that of $A$, a contradiction.
\end{proof}

This result shows that there exist allocations that combine fairness with (Pareto) efficiency. \newtext{\citet{yuen2023characterization} recently showed that Nash Welfare is the only welfarist function of agent valuations whose maximization leads to allocations that are EF1 and PO.} 

However, computing MNW allocations is not an easy task, as it is known that they are generally hard to even approximate in polynomial time; \newtext{see the relevant discussion in Section~\ref{sec:fair-PO}.} \citet{BarmanKV18} recently made progress by computing such allocations in pseudo-polynomial time, but it still remains unknown whether this can be done in polynomial time. 

We now mention our first open question. 

\begin{open}
Can an EF1 and PO allocation be computed in polynomial time?
\end{open}

%%%%%%%%%%%%%%%%%%%%%%%%%%%
\section{Envy-Freeness up to Any Good (EFX)}
\label{sec:EFX}
%%%%%%%%%%%%%%%%%%%%%%%%%%%

In contrast to EF1, where the existence is guaranteed via simple polynomial-time algorithms, the existence of EFX allocations is a challenging open problem. 
In the past few years, a sequence of works has positively answered this question for important special cases, centered mainly around two axes: a small number of agents or restricted agent valuations.
In this section, we first review the existence of EFX allocations for these special cases, and then discuss results about approximate versions of EFX.

\subsection{Existence and Computation of EFX Allocations for Special Cases}

We first discuss a few cases for which EFX allocations are known to exist.

\subsubsection{Identical Valuations}
The first positive existence result was shown by \citet{PR18} for when the agents have identical valuations, that is, $v_i(\cdot) = v(\cdot)$ for every agent $i$. In particular, \citeauthor{PR18} showed that allocations that satisfy particular properties about the minimum value are EFX.

A {\em leximin} allocation is an allocation that first maximizes the minimum value among all bundles, then maximizes the second minimum value, and so on. While a {\em leximin} allocation is not EFX, \citet{PR18} showed that a slight refinement of it is in fact EFX. The fix is that after maximizing the minimum value, the allocation must maximize the size of the bundle with minimum value, before maximizing the second minimum value, and so on. Such an allocation is called {\em leximin++}.

It is not hard to see that any leximin++ allocation $A$ must be EFX. Let $v(A_1) \le \ldots \le v(A_n)$ and suppose towards to the contrary that $A$ is not EFX; thus, there exist $i < j$ such that $v(X_i) < v(X_j \setminus\{g\})$ for any $g \in X_j$. Then, either the $i$-th minimum value can be strictly increased (if $v(g)>0$) or the size of this bundle can be increased (if $v(g)=0$), which contradicts that $A$ is leximin++. Note that a leximin++ allocation is EFX even when the common valuation function is not additive.

\subsubsection{Ordered Valuations}
Beyond identical valuations, \citet{PR18} also showed that the Envy-Cycle Elimination algorithm returns an EFX allocation for the slightly more general class of {\em ordered instances}, where the agents have identical orderings, but possibly distinct cardinal valuations. In particular, in such instances, there is an ordering of the goods $g_1,\dots,g_m$ such that $v_i(g_1) \ge \ldots \ge v_i(g_m)$ for every agent $i \in N$. The underlying intuition is that an agent can choose a good only when she is not envied by the others, and the newly added good is the least-valued (among the available ones) for every agent, which means that the removal of this good can eliminate any envy of the other agents towards the choosing agent.

\begin{theorem}[\cite{PR18}]
\label{thm:efx-for-identical}
The Envy-Cycle Elimination Algorithm (Algorithm~\ref{alg:envy-cycle}) computes an EFX allocation for every ordered instance.
\end{theorem}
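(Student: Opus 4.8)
The plan is to run the Envy-Cycle Elimination algorithm (Algorithm~\ref{alg:envy-cycle}) on the ordered instance, processing the goods in the common decreasing order $g_1, \ldots, g_m$, and to show by induction on the number of allocated goods that the invariant ``the current partial allocation is EFX'' is maintained. The base case (empty allocation) is trivial. For the inductive step, I would argue in two stages: first that resolving an envy-cycle preserves EFX, and second that allocating the next good (which is the least valuable among the remaining ones, for \emph{every} agent, by the ordered structure) to an unenvied agent preserves EFX.

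First I would handle the cycle-resolution step. When we swap bundles around a directed cycle $C = (i_1, \ldots, i_d)$, each participating agent $i_j$ receives the bundle $A_{i_{j+1}}$ that she strictly preferred to her own, so her value weakly increases; agents outside $C$ keep their bundles. I would check that no new EFX violation is created: for any ordered pair $(i, k)$, agent $i$'s value did not decrease, and the multiset of bundles held by the agents is unchanged (only reassigned), so $v_i(A'_k \setminus \{g\})$ for the new bundle $A'_k$ equals $v_i(A_{k'} \setminus \{g\})$ for some old bundle $A_{k'}$; combining ``$i$'s value weakly up'' with ``the set of candidate bundles-minus-a-good is unchanged'' gives that EFX towards every other agent is preserved. (One should be slightly careful that the envy graph is defined by strict envy, so that an agent in the cycle genuinely gains, guaranteeing the process terminates, but that is already established by \citet{LMMS04}.)

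Next I would handle the good-allocation step, which is the crux. Suppose the partial allocation $A$ is EFX and we give good $g^*$ (the most valuable remaining good for the chosen agent $i$, but — crucially, by the ordered assumption — a good that is \emph{weakly least valuable among all remaining goods in every agent's ranking, since all remaining goods come later in the common order than every already-allocated good}) to an unenvied agent $i$, forming $A'$ with $A'_i = A_i \cup \{g^*\}$. I need to verify EFX for all ordered pairs in $A'$. Pairs not involving $i$ are unchanged, hence still fine. For a pair $(i, k)$ with $k \neq i$: $i$'s value only went up, and $A'_k = A_k$, so this holds. The real work is the pair $(k, i)$ for some other agent $k$: I must show $v_k(A'_k) \ge v_k(A'_i \setminus \{g\})$ for every $g \in A'_i$. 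Since $i$ was unenvied before the allocation, $v_k(A_k) \ge v_k(A_i)$. If $g = g^*$, then $A'_i \setminus \{g^*\} = A_i$, so $v_k(A'_i \setminus \{g^*\}) = v_k(A_i) \le v_k(A_k) = v_k(A'_k)$, done. If $g \in A_i$, then $A'_i \setminus \{g\} = (A_i \setminus \{g\}) \cup \{g^*\}$, and here I would invoke the ordered structure: $g^*$ appears no earlier than any good in $A_i$ in the common order, hence $v_k(g^*) \le v_k(g)$ for every $g \in A_i$ already allocated — wait, more precisely $g^*$ is weakly less valuable to $k$ than any good previously handed out, so $v_k(A_i \setminus \{g\} \cup \{g^*\}) \le v_k(A_i)$ when $g$ is a good at least as valuable as $g^*$; since \emph{every} good in $A_i$ was allocated earlier than $g^*$ and the instance is ordered, $v_k(g) \ge v_k(g^*)$ for all such $g$, giving $v_k(A'_i \setminus \{g\}) \le v_k(A_i) \le v_k(A_k) = v_k(A'_k)$, as needed. (Additivity is used here; the argument does not immediately extend to non-additive valuations.)

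The main obstacle is the bookkeeping in the last case: one must be confident that the good $g^*$ being allocated is genuinely no more valuable (to every agent) than every good already in circulation, which is exactly where the ordered-instance hypothesis does its work — the algorithm always allocates goods in the fixed global order, so at the moment $g^*$ is allocated, all previously allocated goods precede it, hence dominate it in every agent's valuation. A secondary subtlety is ensuring the cycle-resolution step does not disturb the ``ordered'' progress (it only permutes bundles, so it does not), and that the process is well-defined — i.e., after finitely many cycle resolutions an unenvied agent always exists — which follows from the potential-function argument of \citet{LMMS04}. Once these pieces are in place, the invariant holds through to the final allocation, which is therefore EFX.
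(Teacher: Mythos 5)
Your proof is correct and rests on the same two key facts as the paper's: the receiving agent is unenvied at the moment a good is allocated, and in an ordered instance the goods are handed out in the common decreasing order, so the newly allocated good is weakly the least valuable (to every agent) among all goods in circulation. The paper phrases this as a direct argument on the final allocation (remove the last good added to $A_j$ and compare with any other good), whereas you maintain EFX as an inductive invariant; your version is essentially the same argument with somewhat more explicit bookkeeping for the cycle-resolution swaps, which the paper only remarks on in passing.
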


\begin{proof}
Consider any agent $j$ and let $g$ be the last good agent $j$ is allocated. For any other agent $i$, we have $v_i(A_i) \geq v_i(A_j\setminus \{g\})$ because when good $g$ is allocated to agent $j$, $j$ is not envied by any other agent. Since the instance is ordered, for every good $g' \in A_j\setminus\{g\}$ (that is allocated before $g$), we have $v_i(g') \geq v_i(g)$.
Note that while some items in $A_j$ might be swapped in the envy cycle elimination procedure, we can still ensure $v_i(g') \geq v_i(g)$ because the instance is ordered and items with larger value are allocated first.
Hence, for every $g'\in A_j$, 
\begin{align*}
v_i(A_i) \geq v_i(A_j\setminus \{g\}) = v_i(A_j) - v_i(g) \geq  v_i(A_j) - v_i(g') = v_i(A_j\setminus \{g'\}).
\end{align*}
In other words, the allocation is EFX.
\end{proof}

\subsubsection{EFX for Two and Three Agents}
When there are only two agents, we can compute an EFX allocation using a \emph{divide-and-choose} approach that exploits the existence of EFX allocations for agents with identical valuations.

\begin{theorem}[\cite{PR18}]
EFX allocations always exist for instances with two agents.
\end{theorem}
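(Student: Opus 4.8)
The plan is to use a \emph{divide-and-choose} (cut-and-choose) protocol, reducing the two-agent problem to the identical-valuations case that we have already handled. First I would have agent $1$ act as the ``divider'': she computes an allocation $(B_1, B_2)$ of $M$ into two bundles that is EFX \emph{with respect to her own valuation $v_1$}, treating the instance as if both agents had valuation $v_1$. Such an allocation exists by the leximin++ argument for identical valuations (which, as noted in the excerpt, works even for non-additive valuations). Then agent $2$ acts as the ``chooser'': she picks whichever of $B_1, B_2$ she values more, say $B_{k}$, and agent $1$ receives the other bundle $B_{3-k}$.

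The key steps in order: (1) invoke the identical-valuations result to get a partition $(B_1,B_2)$ that is EFX under $v_1$; (2) let agent $2$ choose her favorite of the two bundles; (3) verify EFX for agent $2$ — this is immediate, since agent $2$ receives a bundle she weakly prefers to the other, so $v_2(A_2) \ge v_2(A_1) \ge v_2(A_1 \setminus \{g\})$ for every $g \in A_1$ by monotonicity, hence agent $2$ is not even envious; (4) verify EFX for agent $1$ — she receives one of the two bundles of a partition that is EFX \emph{from her perspective}, and EFX of a two-bundle partition is a symmetric condition (removing any good from either bundle kills the envy in that direction), so whichever bundle she is left with, the EFX guarantee against the other bundle still holds under $v_1$, which is exactly her valuation; hence agent $1$'s EFX condition towards agent $2$ is satisfied.

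The only mild subtlety — and the step I would be most careful about — is point (4): one must check that a partition into two bundles that is ``EFX for a single valuation $v_1$'' really does certify agent $1$'s EFX condition no matter which of the two bundles the chooser takes away. This follows because the identical-valuations EFX condition on $(B_1,B_2)$ says $v_1(B_1) \ge v_1(B_2 \setminus \{g\})$ for all $g \in B_2$ \emph{and} $v_1(B_2) \ge v_1(B_1 \setminus \{g\})$ for all $g \in B_1$ simultaneously; so in either outcome the bundle agent $1$ keeps dominates the other up to any single good. There is no real obstacle here beyond making sure the leximin++ existence result is quoted in the right (single-valuation) form; everything else is a short monotonicity argument.
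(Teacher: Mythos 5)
Your proposal is correct and is essentially identical to the paper's proof: both use divide-and-choose, letting one agent produce a two-bundle partition that is EFX under her own valuation (via the identical-valuations result, e.g.\ leximin++) and letting the other agent pick her preferred bundle. The paper's only additional remark is that for additive valuations one can alternatively obtain the divider's EFX partition via Envy-Cycle Elimination on the induced identical-valuation instance.
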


\begin{proof}
We fix any agent $i$ and compute an EFX allocation (consisting of two bundles), pretending that both agents have the same valuation function $v_i$.
For additive functions we can use the Envy-Cycle Elimination algorithm, as we have shown in Theorem~\ref{thm:efx-for-identical}; for non-additive functions we can use the leximin++ allocation.
The other agent is allocated her favorite among the two bundles, and the remaining bundle is allocated to agent $i$. 
Clearly, since the other agent gets her favorite bundle, she does not envy. In addition, since the allocation used for the definition of the two bundles is EFX to agent $i$, no matter which bundle is left for agent $i$, she must be EFX towards the other agent. 
\end{proof}

\noindent 
\newtext{\citet{goldberg2023frontier} recently showed that it is possible to efficiently compute EFX allocations when there are only two agents for some more general valuation functions such as gross substitute or budget-additive, but the problem becomes PLS-complete for submodular functions.}

For instances with three agents, in a breakthrough paper, \citet{ChaGM20} presented a procedure that computes an EFX allocation in pseudo-polynomial time. The original proof of \citeauthor{ChaGM20} is quite involved and consists of careful and complex case analysis; a slightly simpler proof was recently given by \citet{journals/corr/abs-2205-07638}. Computing EFX allocations for three agents in polynomial time is still an open problem. 

\subsubsection{Bi-valued Valuations}
For bi-valued instances, in which there are only two distinct possible values that an agent may have for the goods, \citet{ABFHV21} showed that EFX allocations exist and can be efficiently computed for any number of agents. 
Later, \citet{GargM21} showed that this is possible even in conjunction with PO. In fact, \citet{ABFHV21} also demonstrated an interesting connection between EFX and MNW allocations for bi-valued instances, by showing that MNW allocations always satisfy EFX. A similar result was later shown by \cite{BabaiEF21} for general valuations with binary (i.e., $0$ or $1$) marginals.

\subsubsection{Discussion}
It is instructive to mention that the original definition of EFX due to \citet{CaragiannisKMPS19} was a bit weaker and required the removed good to be positively-valued. In the case of binary valuations, a special case of bi-valued instances, the distinction between the two versions of EFX does make a difference, but in more general settings the existence and computation of the weak version can be reduced to the existence and computation of the stronger version; see \citet{ABFHV21} for a related discussion. It is easy to see that EF implies EFX, which implies weak EFX, which in turn implies EF1. 

The aforementioned results are positive first steps towards showing the existence of EFX allocations, but a general positive (or negative) result remains elusive. This brings us to our second open problem, which is one of the most important open problems in discrete fair division.

\begin{open}\label{open:EFX}
Do EFX allocations exist for instances with $n \geq 4$ agents and unrestricted additive valuations?
\end{open} 

\subsection{Relaxations of EFX}
Instead of focusing on exact EFX allocations, a growing line of work has taken a different approach by aiming to compute allocations that are \emph{approximately} EFX, for different notions of approximation. The first such notion is in terms of multiplicative approximations to the values obtained by the agents.

\begin{defn}[$\alpha$-EFX]
Let $\alpha \in (0,1]$. An allocation $A$ is $\alpha$-EFX if, for every pair of agents $i,j \in N$, $v_i(A_i) \geq \alpha \cdot v_i(A_j \setminus \{g\}$) for any $g\in A_j$. 
\end{defn}

\noindent 
\citet{PR18} were the first to pursue this, showing that 1/2-EFX allocations always exist, even for subadditive valuation functions; later, \citet{ChanCLW19} showed that computing such allocations can be done in polynomial time. 
Indeed, it can be shown that for agents with additive valuation functions, the Envy-Cycle Elimination algorithm (Algorithm~\ref{alg:envy-cycle}) computes a $1/2$-EFX allocation, as long as we break ties in favor of empty bundles when picking an unenvied agent.

\begin{theorem}[\citep{ChanCLW19}] \label{thm:1/2-EFX}
The Envy-Cycle Elimination algorithm computes a $1/2$-EFX allocation.
\end{theorem}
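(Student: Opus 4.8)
The plan is to follow, along the execution of the algorithm, the value each agent assigns to her own current bundle, together with two structural facts about the last good appended to a bundle. I would first fix notation: think of a \emph{bundle} as a set of goods that only grows over time (goods are appended to it when its current owner is handed a good) and that merely changes owner when a cycle is resolved; thus each bundle has a well-defined sequence of goods in the order they were appended to it. With this viewpoint I record a \emph{monotonicity} observation: for any fixed agent $i$, the value $v_i$ of $i$'s current bundle is non-decreasing throughout the run --- it increases when a good is appended to $i$'s bundle, is unchanged when $i$ is outside a resolved cycle, and strictly increases when $i$ takes part in a cycle (she exchanges her bundle for one she strictly envies).

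Now fix two distinct agents $i,j$; the case $i=j$ is immediate from monotonicity of $v_i$. I would dispose of two easy cases first. If $|A_j|\le 1$, then $v_i(A_j\setminus\{g\})=0$ for every $g\in A_j$ and there is nothing to prove. If $A_i=\varnothing$, I claim every final bundle has size at most $1$: an agent with an empty bundle is never envied (no agent values $\varnothing$ above her own bundle) and hence never lies on a cycle, so $i$ --- who receives nothing and therefore stays empty throughout --- remains unenvied the whole time; the empty-bundle tie-break then forces every allocated good to go to a currently empty agent, so no bundle ever grows past a single good, and the claim follows.

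So assume $|A_j|\ge 2$ and $A_i\neq\varnothing$, and let $g^*$ be the last good appended to the bundle that ends up as $A_j$. The heart of the proof is two inequalities. (i) The EF1-type bound $v_i(A_i)\ge v_i(A_j\setminus\{g^*\})$: when $g^*$ was appended, the bundle $A_j\setminus\{g^*\}$ was held by an agent who was unenvied at that moment, so $i$ did not envy it then, and monotonicity carries this inequality to the end (if $i$ was herself that agent, then since $i\neq j$ she must later have traded $A_j$ away on a cycle for a bundle at least as valuable to her). (ii) The bound $v_i(A_i)\ge v_i(g^*)$, which is where both features of this variant --- handing out the \emph{most valuable} available good and breaking ties toward empty bundles --- are used. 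Let $\tau_0$ be the round in which $i$ makes her first pick; it exists because $A_i\neq\varnothing$, and before $\tau_0$ agent $i$'s bundle is empty, since an empty bundle never lies on a cycle so $i$'s first acquisition of any good must be a pick. As in the previous paragraph, in each of the rounds up to and including $\tau_0$ agent $i$ is empty and unenvied, so in none of these rounds is a good appended to a bundle of size $\ge 1$ (the round-$\tau_0$ pick goes into $i$'s own, still empty, bundle). Since $g^*$ is appended to a bundle of size $|A_j|-1\ge 1$, this happens in some round after $\tau_0$; hence $g^*$ was still unallocated when $i$ picked her favorite available good $x_0$ in round $\tau_0$, so $v_i(x_0)\ge v_i(g^*)$, and monotonicity gives $v_i(A_i)\ge v_i(x_0)\ge v_i(g^*)$. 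Adding (i) and (ii) yields $2\,v_i(A_i)\ge v_i(A_j\setminus\{g^*\})+v_i(g^*)=v_i(A_j)\ge v_i(A_j\setminus\{g\})$ for every $g\in A_j$, which is the claimed $1/2$-EFX guarantee.

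The step I expect to require the most care is the bookkeeping around cycle resolutions: one has to commit to the ``persistent, owner-changing bundle'' viewpoint so that ``the last good appended to $A_j$'' and ``the round $\tau_0$ of $i$'s first pick'' are well defined, and verify that monotonicity of $v_i$ on $i$'s own bundle genuinely survives every trade. The other delicate point is the claim that, while $i$ holds an empty bundle, the tie-breaking rule keeps every bundle of size at most $1$ (which in turn needs the small observation that a bundle can only change size through a pick, never through a cycle resolution): this is precisely what powers inequality (ii), and it is exactly where plain Envy-Cycle Elimination, without the empty-bundle tie-break, can fail to be $1/2$-EFX.
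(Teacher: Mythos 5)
Your proof is correct and follows essentially the same route as the paper's: both rest on the two bounds $v_i(A_i)\ge v_i(A_j\setminus\{g^*\})$ and $v_i(A_i)\ge v_i(g^*)$ for the last good $g^*$ added to $A_j$, combined into $2v_i(A_i)\ge v_i(A_j)$, with the empty-bundle tie-break guaranteeing that $i$ picked some good while $g^*$ was still available. The only (cosmetic) difference is that you handle the case $A_i=\varnothing$ by showing all final bundles then have size at most one, whereas the paper rules it out directly by noting $g^*$ would have gone to $i$; your version also spells out the persistent-bundle bookkeeping and monotonicity that the paper leaves implicit.
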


\begin{proof}
Consider any two agents $i$ and $j$. We must show that $v_i(A_i) \geq 1/2\cdot v_i(A_j\setminus \{g\})$ for any $g\in A_j$.
The statement holds trivially if $|A_j| = 1$. So in the following, we assume that $|A_j| \geq 2$.

Let $g^*$ be the last good allocated to agent $j$. Since we break ties in favor of empty bundles when picking an unenvied agent, it must be the case that $A_i \neq \emptyset$, as otherwise $g^*$ would not be allocated to $j$, but to $i$.
Since $A_i \neq \emptyset$, there must be a round before $g^*$ is allocated during which agent $i$ gets to pick an item.
Such a round exists because empty bundles will not be involved in the envy cycle elimination procedure. Moreover, while items in $A_i$ might be swapped a few times during the envy cycle elimination, the value of $i$ for $A_i$ is non-decreasing. Therefore, $v_i(A_i) \geq v_i(g^*)$. By the design of the algorithm, we have that 
$v_i(A_i) \geq v_i(A_j\setminus \{g^*\})$.
Combining the two lower bounds, we have
\begin{align*}
    2\cdot v_i(A_i) \geq v_i(A_j) \geq v_i(A_j\setminus\{g\}), \quad \forall g\in A_j,
\end{align*}
and thus the allocation is $1/2$-EFX.
\end{proof}

The approximation ratio for the additive case was further improved by \citet{ANM2019} to $\phi-1 \approx 0.618$ by combining Round-Robin and Envy-Cycle Elimination with some appropriate pre-processing. To this end, we have the following open question: 

\begin{open}\label{open:alpha-EFX}
What is the best possible $\alpha$ for which $\alpha$-EFX allocations exist? 
\end{open}

A positive answer to Open Problem~\ref{open:EFX} would establish that $\alpha=1$ in Open Problem~\ref{open:alpha-EFX}, but a negative answer would make the latter open problem very meaningful in its own right. Additionally, as  is the case for all of these notions, the next natural question is whether existence can be paired with polynomial-time algorithms for finding such allocations, or whether some kind of computational hardness can be proven.

\smallskip

Another recent approach is that of relaxing the requirement to allocate \emph{all} available goods. Clearly, if done without any constraints, this makes the problem trivial: simply leaving all goods unallocated results in an envy-free allocation. However, the objective here is to only leave ``a few'' goods unallocated (e.g., donate them to charity instead), or remove some goods without affecting the maximum possible Nash social welfare by ``too much''. On this front, \citet{caragiannis2019charity} showed that it is possible to compute an EFX allocation of a subset of the goods, the Nash welfare of which is at least half of the maximum Nash welfare on the original set. \citet{CKMS20} presented an algorithm that computes a partial EFX allocation, but the number of unallocated goods is at most $n-1$, and no agent prefers the set of these goods to her own bundle. 
The latter result was recently improved by \citet{aaai/BergerCFF22} who showed that the unallocated goods can be decreased to $n-2$ in general, and to just one for the case of four agents. Finally, \citet{ChaudhuryGMMM21} showed that a $(1-\varepsilon)$-EFX allocation with  a sublinear number of unallocated goods and high Nash welfare can be computed in polynomial time for every constant $\varepsilon\in (0,0.5]$. This motivates the next open problem.

\begin{open}
Is it possible to achieve an exact EFX allocation by donating a sublinear number of goods?
\end{open} 

%%%%%%%%%%%%%%%%%%%%%%%%%%%
\section{Maximin Share Fairness (MMS)}
\label{sec:MMS}
%%%%%%%%%%%%%%%%%%%%%%%%%%%

As we have mentioned, although the fairness notion of MMS significantly weakens the requirement of PROP, MMS allocations are still not guaranteed to exist. Therefore, the majority of the literature has focused on approximate MMS allocations.

\begin{defn}[$\alpha$-MMS]\label{def:mms:app}
Let $\alpha \in (0,1]$. An allocation $A$ is $\alpha$-MMS if $v_i(A_i)\geq \alpha\cdot \mu^n_i(M)$ for all $i\in N$. 
\end{defn}

\noindent
Before introducing the state-of-the-art results, we first state some properties about MMS allocations that are helpful for designing approximation algorithms.
The first property is the monotonicity of the MMS values; specifically, if we arbitrarily remove one agent and one good from the instance, the MMS value of each of the remaining agents in the remaining instance does not decrease. 
This property was first formally stated by \citet{BouveretL16} (as the $k=1$ case of Lemma 1 therein) and since then several variants have appeared in the literature. Here, we state the version of \citet{AMNS17}.

\begin{lemma}[\citep{BouveretL16,AMNS17}]
\label{lem:mms:monotonicity}
For any agent $i$ and any good $g$, $\mu_i^{n-1}(M\setminus\{g\}) \ge \mu_i^{n}(M)$.
\end{lemma}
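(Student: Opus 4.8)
The plan is to exhibit an explicit $(n-1)$-partition of $M\setminus\{g\}$ every part of which has value at least $\mu_i^n(M)$; since $\mu_i^{n-1}(M\setminus\{g\})$ is defined as a maximum over all such partitions of the minimum part value, this immediately gives the claim. The starting point is an optimal $n$-partition.

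Concretely, first I would let $B=(B_1,\ldots,B_n)\in\mathcal{A}_n(M)$ be a partition of $M$ attaining the maximin value, so that $v_i(B_j)\ge\mu_i^n(M)$ for every $j\in[n]$. The good $g$ lies in exactly one part; after relabelling assume $g\in B_n$. Now I would build an $(n-1)$-partition $C=(C_1,\ldots,C_{n-1})$ of $M\setminus\{g\}$ by discarding the part $B_n$ and merging its leftover items into one of the surviving parts: set $C_1 = B_1 \cup (B_n\setminus\{g\})$ and $C_j = B_j$ for $2\le j\le n-1$. One should check the bookkeeping — the $C_j$ are pairwise disjoint because the $B_j$ were, and $\bigcup_{j=1}^{n-1} C_j = (M\setminus B_n)\cup(B_n\setminus\{g\}) = M\setminus\{g\}$ — so $C\in\mathcal{A}_{n-1}(M\setminus\{g\})$ is a legitimate partition. (This implicitly uses $n\ge 2$, which is the only regime where the statement is meaningful.)

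Next I would invoke monotonicity of the valuation function $v_i$: since $C_j\supseteq B_j$ for every $j\in[n-1]$, we have $v_i(C_j)\ge v_i(B_j)\ge\mu_i^n(M)$, and in particular
\begin{align*}
\min_{1\le j\le n-1} v_i(C_j) \ \ge\ \mu_i^n(M).
\end{align*}
Finally, because $\mu_i^{n-1}(M\setminus\{g\})$ is the maximum over all partitions in $\mathcal{A}_{n-1}(M\setminus\{g\})$ of the smallest part value, and $C$ is one such partition, $\mu_i^{n-1}(M\setminus\{g\})\ge\min_j v_i(C_j)\ge\mu_i^n(M)$, which is exactly what we wanted.

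There is no real obstacle here: the argument is essentially a counting/monotonicity observation, and the only point deserving a moment's care is verifying that the constructed $C$ is a valid $(n-1)$-partition of $M\setminus\{g\}$ (disjointness and exact coverage). Note also that additivity is never used — monotonicity of $v_i$ suffices — so the lemma holds for the general monotone valuation model as well.
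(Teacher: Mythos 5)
Your proposal is correct and follows essentially the same argument as the paper: take an optimal $n$-partition, remove $g$, fold the remainder of $g$'s part into the surviving parts, and conclude by monotonicity of $v_i$ (the paper merely redistributes $B_n\setminus\{g\}$ arbitrarily among all $n-1$ parts rather than dumping it into one). Your added observation that only monotonicity, not additivity, is used is accurate and consistent with the paper's reasoning.
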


\begin{proof}
Let $(A_1,\ldots,A_n)$ be an $n$-partition of $M$ that achieves $\mu_i^{n}(M)$, i.e., $v_i(A_j) \ge \mu_i^{n}(M)$ for all $j \in [n]$. Without loss of generality, suppose $g \in A_n$.
We arbitrarily reallocate the goods in $A_n \setminus \{g\}$ (if any) to the sets $A_1,\ldots,A_{n-1}$, and denote the resulting allocation by $(B_1,\ldots,B_{n-1})$; clearly, this is a partition of $M\setminus\{g\}$ into $n-1$ bundles.
Since $v_i(B_j) \ge v_i(A_j) \ge \mu_i^{n}(M)$ for all $j \in [n-1]$, it follows that $\mu_i^{n-1}(M\setminus\{g\}) \ge \mu_i^{n}(M)$.
\end{proof}

\noindent
Lemma~\ref{lem:mms:monotonicity} essentially suggests that if there is a large good $g$ that satisfies the desired approximation of MMS value for some agent $i$, we can simply assign $g$ to $i$ and finalize the bundle $A_i = \{g\}$ without hurting the remaining agents in the reduced instance.

\subsection{Computing $1/2$-MMS Allocations}
Here, we consider three simple algorithms for computing $1/2$-MMS allocations in polynomial time. Better approximations can be obtained by variants and extensions of the techniques employed by these algorithms; see Section~\ref{sec:MMS-better-approx}.

The monotonicity of the MMS value (Lemma~\ref{lem:mms:monotonicity}) allows us to focus only on the case when the agents have sufficiently small values for the goods, in particular, when $v_i(g) \le \mu_i^n/2$ for every $i \in N$ and $g \in M$. Specifically, as long as there is an agent $i$ and a good $g$ such that $v_i(g) > \mu_i^n/2$, we can allocate $g$ to $i$ so that $i$ obtains half of her MMS value, and never consider $i$ or $g$ again. If we can guarantee $1/2$-MMS for the remaining agents in the reduced instance, the same guarantee must also hold in the original instance by Lemma~\ref{lem:mms:monotonicity}. One way to achieve this is by using Round-Robin (see Algorithm~\ref{alg:round-robin}) \citep{AMNS17}. 

\begin{lemma}
If $v_i(g) \le 1/2\cdot \mu_i^n$ for every $i \in N$ and $g \in M$, the Round-Robin algorithm returns an allocation that is $1/2$-MMS.
\end{lemma}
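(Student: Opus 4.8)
The plan is to fix an arbitrary agent $i$ and bound $v_i(A_i)$ from below, aiming for the slightly stronger estimate $v_i(A_i) \ge \tfrac{n+1}{2n}\mu_i^n$, which in particular exceeds $\tfrac12\mu_i^n$. Note first that if $\mu_i^n = 0$ the hypothesis $v_i(g) \le \tfrac12\mu_i^n$ forces $v_i \equiv 0$ and there is nothing to prove; so we may assume $\mu_i^n > 0$, which in particular gives $m \ge n$ and hence that agent $i$ is served at least once.

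First I would fix notation for the execution. Let $p \in \{1,\dots,n\}$ be agent $i$'s position in the ordering, let $g^1_i,\dots,g^K_i$ be the goods she picks in order (so $v_i(A_i)=\sum_{k=1}^K v_i(g^k_i)$), and partition the remaining goods into the prefix $P$ of the $p-1$ goods handed out before agent $i$'s first turn, together with blocks $B_1,\dots,B_K$, where $B_k$ collects the goods given out strictly after agent $i$'s $k$-th turn and before her $(k{+}1)$-st turn (with $B_K$ collecting everything after her last turn). A turn count shows $|P| \le n-1$ and $|B_k| \le n-1$ for every $k$, and $M = P \cup A_i \cup B_1 \cup \dots \cup B_K$ is a partition.

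The key structural fact is a greedy-dominance property: when agent $i$ takes $g^k_i$, every good in $B_k$ is still unallocated, so $v_i(g) \le v_i(g^k_i)$ for all $g\in B_k$; summing over the block, $v_i(B_k) \le (n-1)\,v_i(g^k_i)$, and thus $\sum_k v_i(B_k)\le (n-1)\,v_i(A_i)$. For the prefix I would instead use the hypothesis on individual goods, namely $v_i(P) \le (n-1)\cdot\tfrac12\mu_i^n$. Combining, $v_i(M) = v_i(P) + v_i(A_i) + \sum_k v_i(B_k) \le \tfrac{n-1}{2}\mu_i^n + n\,v_i(A_i)$. Since the MMS-defining $n$-partition witnesses $v_i(M)\ge n\mu_i^n$, rearranging yields $n\,v_i(A_i) \ge n\mu_i^n - \tfrac{n-1}{2}\mu_i^n = \tfrac{n+1}{2}\mu_i^n$, i.e.\ $v_i(A_i)\ge\tfrac{n+1}{2n}\mu_i^n \ge \tfrac12\mu_i^n$. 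As $i$ is arbitrary, the allocation is $1/2$-MMS.

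The argument is short, so the main point is not a deep obstacle but getting the bookkeeping right: one must verify that the last block $B_K$ really has at most $n-1$ goods (it does, precisely because agent $i$ has no $(K{+}1)$-st turn, so $m < p+Kn$), that the dominance inequality $v_i(g)\le v_i(g^k_i)$ genuinely holds for every $g\in B_k$ (such a $g$ is allocated at a later turn, hence was available when $g^k_i$ was chosen), and that the degenerate regime $m < n$ — already dispatched above via $\mu_i^n = 0$ — causes no trouble. Everything else is routine.
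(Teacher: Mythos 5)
Your proof is correct. It rests on the same two ingredients as the paper's argument --- the greedy-dominance fact that agent $i$'s $k$-th pick is at least as valuable (to $i$) as anything allocated later, and the combination of $v_i(M)\ge n\mu_i^n$ with the small-goods hypothesis --- but it organizes the bookkeeping differently. The paper first establishes the per-agent comparison $v_i(A_i)\ge v_i(A_j\setminus\{g_j\})$, where $g_j$ is $j$'s first-round pick (essentially the EF1 guarantee of Round-Robin), and then sums these $n$ inequalities, discarding the $n$ first-round goods at a cost of at most $\tfrac{n}{2}\mu_i^n$. You instead partition $M$ along the timeline of the execution into the prefix $P$ and the inter-turn blocks $B_1,\dots,B_K$, bound each block by $(n-1)$ copies of the pick that precedes it, and only pay the small-goods penalty on the at most $n-1$ prefix goods. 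The two routes are close cousins, but yours is self-contained (it does not pass through the envy comparison) and the tighter accounting of the prefix yields the marginally stronger bound $v_i(A_i)\ge\tfrac{n+1}{2n}\mu_i^n$; the price is the extra care you correctly take with the degenerate cases ($\mu_i^n=0$, the size of the final block $B_K$). Both proofs are equally valid ways to obtain the lemma.
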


\begin{proof}
Let $A=(A_1,\ldots,A_n)$ be the allocation returned by the Round-Robin algorithm. 
If $m < n$, then $\mu_i^n = 0$ for every $i\in N$ and thus any allocation is MMS. So, assume that $m \ge n$.
By the design of the algorithm, for each agent $i$, the item selected in any round must be at least as good as the items allocated to any agent in subsequent rounds.
For any $i \in N$, let $g_i$ be the good selected by agent $i$ in the first round. 
Now let $i$ be an arbitrary agent.
% and let $g^*$ be the item with largest value for $i$.
Then, $v_i(A_i) \ge v_i(A_j \setminus\{g_j\})$ for any $j \in N$.
By summing these inequalities over $j\in N$, we obtain 
\begin{align*}
    n\cdot v_i(A_i) \ge v_i(M) - \sum_{j\in N} v_i(g_j) \ge n\cdot \mu_i^n - \frac{n}{2}\cdot \mu_i^n = \frac{n}{2}\cdot \mu_i^n,
\end{align*}
where the second inequality follows since $v_i(M) \ge n\cdot \mu_i^n$ and $v_i(g_j) \le \mu_i^n/2$ for all $j\in N$. Therefore, for any agent $i$, we have that $v_i(A_i) \geq \mu_i^n/2$, that is, she obtains at least half of her MMS value and the allocation is $1/2$-MMS.
\end{proof}

It is also not hard to see that the Envy Cycle Elimination algorithm (Algorithm~\ref{alg:envy-cycle}) returns an allocation that is $1/2$-MMS (in addition to being EF1 and $1/2$-EFX).

\begin{theorem}
The Envy-Cycle Elimination algorithm computes a $1/2$-MMS allocation.
\end{theorem}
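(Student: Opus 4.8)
The plan is to fix an arbitrary agent $i$ and show $v_i(A_i) \ge \mu_i^n/2$, leaning on two running features of Envy-Cycle Elimination already exploited in the proof of Theorem~\ref{thm:1/2-EFX}: the value $v_i$ assigns to the bundle currently held by $i$ is \emph{non-decreasing} throughout the execution (appending a good only helps, and an envy-cycle swap hands $i$ a bundle she envies, hence values strictly more), and whenever $i$ is served she receives her favorite remaining good. If $m < n$ then $\mu_i^n = 0$ and there is nothing to prove, so assume $m \ge n$.

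First I would record a ``round-robin phase'' observation: an agent holding an empty bundle is envied by no one, so the empty-bundle tie-breaking rule forces such an agent to be served next as long as goods remain. Consequently rounds $1,\dots,n$ serve the $n$ agents one apiece, with no envy-cycle resolution, each agent taking her favorite good among those still available at her turn. In particular $i$ is served in some round $t_i \le n$, and any good appended to any bundle \emph{after} round $n$ was still unallocated at round $t_i$; combined with the non-decreasing property this yields $v_i(A_i) \ge v_i(g)$ for every such ``late'' good $g$.

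Next I would analyze the other bundles. Fix $j \ne i$. If $|A_j| \ge 2$, let $g^*$ be the last good appended to $j$'s bundle; by the phase observation $g^*$ is appended after round $n$, so $v_i(A_i) \ge v_i(g^*)$, and since $j$ was unenvied at the moment $g^*$ was appended and $A_j\setminus\{g^*\}$ is exactly $j$'s bundle right before that, $v_i(A_i) \ge v_i(A_j \setminus \{g^*\})$; hence $v_i(A_j) \le 2 v_i(A_i)$. The remaining bundles are singletons; let $B = \{\, j \ne i : |A_j| = 1 \,\}$, put $b := |B| \le n-1$, and let $G_B$ be the set of (single) goods sitting in those bundles. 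Applying Lemma~\ref{lem:mms:monotonicity} $b$ times, removing one agent and one good of $G_B$ at a time, gives $\mu_i^{\,n-b}(M \setminus G_B) \ge \mu_i^n$, so $v_i(M \setminus G_B) \ge (n-b)\,\mu_i^n$, i.e.\ $v_i(G_B) \le v_i(M) - (n-b)\mu_i^n$.

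Putting the pieces together, $v_i(M) = v_i(A_i) + \sum_{j \notin B\cup\{i\}} v_i(A_j) + v_i(G_B) \le (2n-1-2b)\,v_i(A_i) + v_i(M) - (n-b)\mu_i^n$, which rearranges to $v_i(A_i) \ge \frac{n-b}{2n-1-2b}\,\mu_i^n$, and $\frac{n-b}{2n-1-2b} > \frac12$ for every $0 \le b \le n-1$. I expect the delicate part to be the bundle-size-$\ge 2$ step: one must track bundles rather than agents through the envy-cycle swaps to justify that the last appended good is handed out only after the round-robin phase and that $j$'s bundle just before that addition is precisely $A_j \setminus \{g^*\}$. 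The singleton bundles, by contrast, are exactly where Lemma~\ref{lem:mms:monotonicity} is indispensable, and organizing \emph{all} singletons into one set $G_B$ (rather than only the ``heavy'' ones) is what keeps the final estimate from losing too much.
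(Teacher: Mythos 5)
Your proposal is correct and follows essentially the same route as the paper: both split the other agents into those holding singleton bundles (disposed of via repeated application of Lemma~\ref{lem:mms:monotonicity}) and those with bundles of size at least two (for which the bound $v_i(A_j) \le 2\,v_i(A_i)$ is inherited from the $1/2$-EFX analysis of Theorem~\ref{thm:1/2-EFX}). Your final accounting subtracts the singleton goods from $v_i(M)$ rather than averaging over the reduced instance as the paper does, but this is an algebraically equivalent rearrangement of the same argument, and your explicit bundle-tracking through the envy-cycle swaps is a sound (indeed slightly more careful) justification of the step the paper states as "by the design of the algorithm."
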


\begin{proof}
Consider any agent $i\in N$.
As we have shown in the proof of Theorem~\ref{thm:1/2-EFX}, for every agent $j\neq i$, we have either $|A_j| = 1$ or $v_i(A_j)\leq 2\cdot v_i(A_i)$.
Let $N_1 = \{ j\in N\setminus\{i\}: |A_j|=1 \}$ be the set of agents other than $i$ that are allocated a single item, and $n_1 = |N_1|$.
By Lemma~\ref{lem:mms:monotonicity}, we have $\mu_i^{n-n_1}(M\setminus(\cup_{j\in N_1} A_j)) \ge \mu_i^{n}(M)$.
In other words, the MMS value of agent $i$ in the instance with set of agents $N_2 = N\setminus N_1$ and set of items $M\setminus(\cup_{j\in N_1} A_j)$ is at least as large as $\mu_i^n(M)$.
Since $v_i(A_j) \leq 2\cdot v_i(A_i)$ for every $j \in N_2$, we have that $v_i(\cup_{j\in N_2} A_j) \leq 2\cdot |N_2|\cdot v_i(A_i)$.
Reordering the inequality gives
\begin{align*}
    v_i(A_i) \geq \frac{1}{2}\cdot \frac{v_i(\cup_{j\in N_2} A_j)}{|N_2|} \geq \frac{1}{2}\cdot \mu_i^{n-n_1}(M\setminus(\cup_{j\in N_1} A_j)) \geq \frac{1}{2}\cdot \mu_i^n(M),
\end{align*}
which implies that the allocation is $1/2$-MMS for every agent $i\in N$.
\end{proof}

The third algorithm for computing a $1/2$-MMS allocation relies on a commonly used technique known as {\em bag-filling} \citep{ghodsi2021improvement}. Without loss of generality, and for simplicity, we assume that the MMS values of all agents are normalized to 1: $\mu_i^n = 1$ for every $i\in N$. As in the case of the first algorithm, due to Lemma~\ref{lem:mms:monotonicity}, we can focus on the case where $v_i(g) \le 1/2 = \mu_i^n/2$ for every $i \in N$ and $g\in M$.
We start with an empty bag and keep adding goods one by one into the bag until there is an agent with value at least $1/2$ for the goods in the bag. Then, we allocate all the goods in the bag to this agent. This process is repeated for the remaining agents and goods. When a single agent remains, she is given all the available goods. See Algorithm~\ref{alg:bag-filling}.

\begin{algorithm}[ht]
\caption{Bag-Filling Algorithm}
\label{alg:bag-filling}
\textbf{Input: }{A fair allocation instance $I = (N,M,v)$ with $|N| \ge 2$ and $\mu_i^{n}(M)=1$ for all $i\in N$}\\
\textbf{Output: }{Allocation $A=(A_1,\ldots,A_n)$} \\
% Initialize $A_i \gets \varnothing$ for all agent $i \in N$\;
\For{each agent $i \in N$}{
    $A_i \gets \varnothing$\; 
}
$\BAG \gets \varnothing$\;

\While{$|N| > 1$ and there exist $i\in N$ and $g\in M$ such that $v_i(g) > 1/2$}{
    $A_i \gets \{g\}$\;
    $N \gets N\setminus\{i\}$\;
    $M \gets M\setminus\{g\}$\;
}

\While{$|N| > 1$}{
    \While{$v_i(\BAG) < 1/2$ for every $i\in N$}{
        Select an arbitrary good $g \in M$\; 
        Set $\BAG \gets \BAG \cup \{g\}$\; 
        Set $M \gets M\setminus\{g\}$\;
    }
    Let $i^* \in N$ be any agent with $v_{i^*}(\BAG) \ge 1/2$\;
    $A_{i^*} \gets \BAG$\; 
    $N \gets N\setminus\{i^*\}$\;
    $\BAG \gets \varnothing$\; 
}  

Let $i$ be the remaining agent in $N$\; 
$A_i \gets M$\;
\end{algorithm}

\begin{theorem}
The Bag-Filling Algorithm (Algorithm \ref{alg:bag-filling}) returns a $1/2$-MMS allocation.
\end{theorem}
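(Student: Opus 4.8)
The plan is to analyze the two phases of Algorithm~\ref{alg:bag-filling} separately, recalling throughout that we have normalized $\mu_i^n(M)=1$ for every agent $i$. The first while-loop is immediate: whenever it assigns a singleton $\{g\}$ to an agent $i$, its guard guarantees $v_i(g) > 1/2 = \tfrac12\mu_i^n$, so each agent removed in this phase is already $1/2$-MMS-satisfied and may be ignored. Say $k\ge 0$ agents and $k$ goods are removed this way, and write $\hat N$, $\hat n := n-k = |\hat N|$, $\hat M$ for the surviving agents and goods. Applying Lemma~\ref{lem:mms:monotonicity} $k$ times gives $\mu_i^{\hat n}(\hat M) \ge \mu_i^n(M)=1$ for every $i\in\hat N$; since the maximin $\hat n$-partition of $\hat M$ splits $v_i(\hat M)$ into $\hat n$ parts each of value at least $\mu_i^{\hat n}(\hat M)\ge 1$, additivity yields $v_i(\hat M)\ge \hat n$. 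If $\hat n=1$, the lone survivor receives all of $\hat M$ and is done since $v_i(\hat M)\ge 1$. So assume $\hat n\ge 2$; then the first loop must have terminated because no remaining agent-good pair had value exceeding $1/2$, so every surviving good is ``small'': $v_i(g)\le 1/2$ for all $i\in\hat N$, $g\in\hat M$.

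Next I would analyze the bag-filling phase through one structural observation: \emph{every bag $B$ handed out has $v_i(B) < 1$ for every agent $i$ present when $B$ is formed}. Indeed, just before the last good $g$ enters $B$ the loop guard still held, so $v_i(B\setminus\{g\}) < 1/2$ for every present $i$; adding $g$ contributes at most $1/2$, hence $v_i(B) < 1$. Since agents only leave over time, at any later point every already-distributed bag is worth less than $1$ to every still-present agent. From this I would prove, by induction on the number of bags distributed, the invariant that when the outer loop is entered with $\ell$ agents remaining and good set $M^{(\ell)}$, one has $v_i(M^{(\ell)}) \ge \ell$ for every remaining $i$ (the base case $\ell=\hat n$ is $v_i(\hat M)\ge\hat n$; the inductive step subtracts one bag of value $<1$). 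In particular $v_i(M^{(\ell)}) \ge \ell \ge 2 > 1/2$ for all present $i$, which shows the inner loop never exhausts the goods before some agent's bag-value reaches $1/2$ — if it did, the just-formed bag would equal $M^{(\ell)}$, contradicting the structural bound $v_i(B)<1$ — so the loop is well-defined and the agent $i^*$ receiving the bag always gets $v_{i^*}(B)\ge 1/2 = \tfrac12\mu_{i^*}^n$.

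Finally I would handle the last surviving agent $i$, who is given $M^{(1)}$ after the other $\hat n - 1$ agents have each removed a bag: by the structural observation, $v_i(M^{(1)}) = v_i(\hat M) - \sum_{\text{bags }B} v_i(B) > \hat n - (\hat n - 1) = 1 \ge \tfrac12$, so agent $i$ is also $1/2$-MMS-satisfied. Combining the three cases (phase-one agents, bag recipients, and the final agent) establishes that the output is $1/2$-MMS. The main obstacle — and the only genuinely delicate part — is the bag-filling phase: one must notice that each distributed bag is worth strictly less than $1$ to \emph{every} surviving agent, not just to its recipient, and then combine this with $v_i(\hat M)\ge\hat n$ to obtain the invariant $v_i(M^{(\ell)})\ge\ell$, which simultaneously guarantees the algorithm never gets stuck and that the last agent is not left short-changed.
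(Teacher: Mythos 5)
Your proposal is correct and follows essentially the same route as the paper: each distributed bag is worth at least $1/2$ to its recipient but strictly less than $1$ to every agent still present (since the loop guard held before the last good was added and every remaining good is worth at most $1/2$), which yields the invariant that $\ell$ remaining agents always see total value at least $\ell$ in the remaining goods, so the last agent is also satisfied. The only difference is that you spell out details the paper leaves to the surrounding text or to the reader — the reduction of the first while-loop via Lemma~\ref{lem:mms:monotonicity} and the well-definedness of the inner loop — which is a welcome tightening rather than a different argument.
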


\begin{proof}
First, recall that any agent has a total value of at least $n$ for the all the goods due to the normalization. In each round of the algorithm, the bag is allocated to an agent that has value at least $1/2$ for the goods in the bag, and thus this agent obtains at least half of her MMS value. The value of each of the other agents for the given bag is smaller than her MMS value of $1$: before the last good is added to the bag their value is smaller than $1/2$ and each good (including the last one added to the bag) has value at most $1/2$. Consequently, when an agent takes away a bag in some round, each of the remaining $n' < n$ agents has total value at least $n'$ for the remaining goods. The argument can be repeated until there is a single agent. Then she gets all the remaining goods of value at least $1$, i.e., her whole MMS value.
\end{proof}

\subsection{Better Approximations}\label{sec:MMS-better-approx}
\citet{KurokawaPW18} were the first to go beyond $1/2$ by showing how to find $2/3$-MMS allocations, albeit not in polynomial time. \citet{AMNS17} matched this guarantee in polynomial time, as did \citet{BarmanK20} with a much simpler algorithm, which relies on an important property that was first proved by \citet{BouveretL16} for exact MMS allocations and then generalized by \citet{BarmanK20} for approximate MMS allocations. This property is very important and suggests that the hardest case is when the agents have the same order of preference over the goods; in other words, there exists a worst-case instance that is ordered.

Given any instance $I=(N,M,v)$ which may not be ordered, we can construct a corresponding ordered instance $I'=(N,M,v')$. 
Based on an ordering $\succ$ of the goods, for each agent $i\in N$, we find a permutation $\sigma_i: M \to M$ such that $v_i(\sigma_i(g)) \ge v_i(\sigma_i(g'))$ for every $g \succ g'$.
Then, we define a new valuation function $v'_i$ with $v'_i(g) = v_i(\sigma_i(g))$ for every $g\in M$; hence, the new value of $i$ for good $g$ that is ranked $\ell$-th in $\succ$ equals the $\ell$-th highest value of $i$ in the original valuation. Given any allocation $A'$ for the ordered instance $I'$, we can obtain an allocation $A$ for the original instance $I$ such that every agent has value for $A_i$ at least as much as for $A'_i$. Hence, if $I'$ admits an MMS allocation then $I$ admits an MMS allocation as well, and the same holds for approximate MMS allocations.

\begin{lemma}[\citep{BouveretL16,BarmanK20}]
\label{lem:reduction-to-ordered}
Let $I=(N,M,v)$ be any instance, $I'$ be its corresponding instance, and $A'=(A_1',\ldots,A_n')$ be an allocation for $I'$ such that $v'_i(A'_i) \ge \alpha_i$ for every $i \in N$. Then, we can construct in polynomial time an allocation $A=(A_1,\ldots,A_n)$ for $I$ such that $v_i(A_i) \ge \alpha_i$ for every $i \in N$.
\end{lemma}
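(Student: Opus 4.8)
The plan is to reduce an allocation on the ordered instance to one on the original instance by a greedy ``round-by-round picking in the same order as the ranking'' argument. For each agent $i$ we have the permutation $\sigma_i$ so that $v'_i(g) = v_i(\sigma_i(g))$, where $g$ is the good ranked $\ell$-th in $\succ$ iff $\sigma_i(g)$ is the $\ell$-th most valued good for $i$ under $v_i$. The key structural fact I would exploit is that $v'_i$ is, by construction, \emph{weakly decreasing along $\succ$} for every agent simultaneously, so if each bundle $A'_i$ is thought of as a set of ``rank slots,'' then whoever owns the higher-ranked slots should be compensated with the more valuable (under the true $v_i$) goods.

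First I would set up the matching procedure. Process the goods in the order $\succ$, from the top-ranked good $g_{(1)}$ down to $g_{(m)}$. Maintain, for each agent $i$, the number $r_i$ of goods from $A'_i$ already handled, and also maintain the set of still-unallocated original goods. When we reach the good $g_{(\ell)}$ ranked $\ell$-th, it belongs to exactly one bundle $A'_i$ in the ordered instance; to agent $i$ we then assign her most valued good (under $v_i$) among the goods not yet allocated in the original instance, and remove it from the pool. This clearly produces a valid allocation $A=(A_1,\dots,A_n)$ of $M$, computable in polynomial time, since at each of the $m$ steps we do one $\arg\max$ over at most $m$ goods.

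Next, the core claim: $v_i(A_i) \ge v'_i(A'_i)$ for every $i$. Fix $i$ and say the ranks of the goods in $A'_i$ are $\ell_1 < \ell_2 < \dots < \ell_k$ (so $v'_i(A'_i) = \sum_{t} v'_i(g_{(\ell_t)})$, and $v'_i(g_{(\ell_t)})$ equals the $\ell_t$-th largest value of $i$). At the moment we process $g_{(\ell_t)}$ and give agent $i$ a good, exactly $\ell_t - 1$ goods (in rank order) have been processed so far across \emph{all} agents, hence exactly $\ell_t - 1$ original goods have been removed from the pool; therefore at least $m - (\ell_t-1)$ goods remain, and the $\arg\max$ picks a good whose $v_i$-value is at least the $\ell_t$-th largest $v_i$-value overall --- which is exactly $v'_i(g_{(\ell_t)})$. (Formally: among any $m-\ell_t+1$ of the goods, the maximum $v_i$-value is at least the $\ell_t$-th order statistic of all $m$ values.) Summing over $t=1,\dots,k$ gives $v_i(A_i) \ge \sum_t v'_i(g_{(\ell_t)}) = v'_i(A'_i) \ge \alpha_i$, using here that $v_i$ is additive so the value of $A_i$ is the sum of the values of the goods picked.

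The step I expect to need the most care is the counting argument that ``when $g_{(\ell_t)}$ is processed, only $\ell_t-1$ original goods have been removed.'' This holds because we process goods strictly in the order $\succ$ and remove exactly one original good per processed good; it is routine but must be stated precisely, since it is what links the \emph{rank} of a slot in the ordered instance to a lower bound on the true value available when that slot is filled. One should also note the mild subtlety that the procedure works for any $\succ$, and in particular the lemma does not require $\succ$ to be consistent with any single agent's preferences --- the ordered instance is defined relative to a fixed but arbitrary $\succ$, and the bound above uses only order statistics, which is why additivity of $v_i$ is the one property genuinely invoked.
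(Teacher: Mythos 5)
Your proof is correct and is exactly the standard ``sequential picking in the order of $\succ$'' argument behind this lemma, which the paper only sketches informally in the paragraph preceding the statement; the counting step you flag (at most $\ell-1$ goods removed when the $\ell$-th ranked good is processed, so the greedy pick is at least the $\ell$-th order statistic of $v_i$) is precisely the crux and you state it correctly. No gaps.
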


\citet{BarmanK20} showed that the Envy-Cycle Elimination algorithm actually returns a $2/3$-MMS allocation for ordered instances, and thus a $2/3$-MMS fair allocation can be found in polynomial time for every instance due to Lemma~\ref{lem:reduction-to-ordered}. The barrier of $2/3$ was broken by \citet{ghodsi2021improvement} who designed an elaborate $(3/4-\epsilon)$-approximation algorithm. A simpler algorithm with a slightly improved approximation guarantee of $3/4 + 1/(12n)$ was then proposed by \citet{GT21}. On the negative side, \citet{FST21} recently showed that it is impossible to achieve an approximation bound better than $39/40$, even for the case of three agents.

\begin{open}
Is it possible to improve upon the bound of $3/4 + 1/(12n)$ for additive valuations? Is there a stronger inapproximability bound than $39/40$?
\end{open}

\subsection{Restricted Instances}

As expected, by restricting the number of agents or the space of the valuation functions, one can get stronger results.

\paragraph{MMS for four or fewer agents.} 
When there are only two agents, a simple cut-and-choose-type protocol always produces an MMS allocation. Specifically, the first agent partitions the set of goods as equally as possible (thus the worst set has value equal to her maximin share) and the second agent decides who gets each of these sets. As suggested above, the first step is computationally hard but producing a $(1-\epsilon)$-MMS allocation in polynomial time is still possible.
Even though in the general case, the algorithm of \citet{KurokawaPW18} guarantees a $2/3$-approximation, for three or four agents it guarantees an improved $3/4$-approximation. 
The approximation factor for three agents was then improved to $7/8$ \citep{AMNS17}, later to $8/9$ \citep{gourves2019MMS} and recently to $11/12$~\citep{feige2022improved}, whereas for the case of four agents the factor was improved to $4/5$ \citep{ghodsi2021improvement}.

\paragraph{MMS for restricted valuations.} 
It follows by Definition \ref{def:mms} that MMS allocations exist for instances where all agents have identical valuation functions. As discussed above, \citet{BouveretL16} showed that, unlike with EFX, the hardest instances for MMS (among all possible instances) are the ordered instances. They also suggested a simple construction of exact MMS allocations when the valuation functions are binary. 
This result can be generalized, as MMS allocations always exist and can be computed efficiently for ternary valuation functions \citep{AMNS17} and for bi-valued valuation functions \citep{Feige22}. \citet{conf/atal/EbadianP022} showed that this is also the case when there are at most two values per agent (possibly not common across all agents) and for general instances where the value of each good is at least as much as the value of all less valuable goods combined. \newtext{Another notable class for which exact MMS allocations are known to exist is that of matroid-rank valuations \citep{barman2021existence}.}

\begin{open}
Are there other classes of structured valuations for which MMS is guaranteed to exist, e.g., when there are only a few (but more than two) possible values?
\end{open}

%%%%%%%%%%%%%%%%%%%%%%%%%%%%%%%%%%%%%%%%%%%%
\section{Further Notable Fairness Notions} \label{sec:other-notions}
%%%%%%%%%%%%%%%%%%%%%%%%%%%%%%%%%%%%%%%%%%%%

In this section, we discuss other notable fairness notions beyond EF, PROP, EF1, EFX and MMS.

\subsection{EFL, EFR, and Epistemic Notions}
The EFX notion was defined as a more realistic counterpart to EF1, however, as we have discussed in Section~\ref{sec:EFX}, it is still unknown if it can always be guaranteed. This has led to the definition of notions that lie in-between EF1 and EFX. \citet{BBMN18} defined the notion of {\em envy-freeness up to one less-preferred good} (EFL) according to which an agent $i$ does not envy another agent $j$ if either $A_j$ contains at most one good that $i$ values positively, or the envy of $i$ can be eliminated by the hypothetical removal of a good $g \in A_j$ such that $v_i(A_i) \geq v_i(g)$. They showed that EFL allocations always exist and can be computed using a variant of the Envy-Cycle Elimination algorithm. \citet{farhadi2021efr} defined the notion of {\em envy-freeness up to a random good} (EFR) according to which the envy of an agent $i$ towards another agent $j$ can be eliminated \emph{in expectation} after the hypothetical removal of a randomly chosen good from $A_j$. They showed that a $0.73$-EFR allocation can be computed in polynomial time.

\begin{open}
Do EFR allocations always exist?
\end{open}

\newtext{Another recent line of research has focused on {\em epistemic} fairness notions motivated by cases where the agents do not have complete knowledge about the computed allocation \citep{chen2017ignorance,aziz2018knowledge,DBLP:conf/aaai/HosseiniSVWX20,garg2022existence}. 
For example, \citet{aziz2018knowledge} considered a setting where the agents are connected via a social network and do not know the bundles allocated to the agents who are not their neighbors. They introduced the notion of {\em epistemic EF} under which an agent is satisfied only if there is a (re)distribution of the goods she is not aware of (i.e., not given to her or to her neighbors) such that the resulting allocation is EF to her. Recently, \citet{garg2022existence} proved the existence of epistemic EFX allocations for additive valuations when the underlying social network contains only isolated nodes.}

\subsection{PMMS and GMMS} 
Several variations of MMS have also been considered. \citet{CaragiannisKMPS19} defined the notion of {\em pairwise maximin share fairness} (PMMS) according to which, for every pair of agents $i$ and $j$, $i$'s value for $A_i$ must be at least as much as the maximum she could obtain by redistributing the set of goods in $A_i \cup A_j$ into two bundles and picking the worst of them. In other words, instead of requiring the maximin share guarantee to be achieved for the set of all agents, PMMS requires that it is achieved for any pair of agents. 

\begin{defn}[$\alpha$-PMMS]
Let $\alpha \in (0,1]$. An allocation $A$ is $\alpha$-PMMS if $v_i(A_i)\geq \alpha\cdot \mu^2_i(A_i \cup A_j)$ for all $i, j\in N$. When $\alpha=1$, the allocation is called PMMS.
\end{defn}

\noindent
Despite the apparent similarities in the definitions of PMMS and MMS, \cite{CaragiannisKMPS19} showed that their exact versions are actually incomparable. The main open problem here is the following.

\begin{open}
Do PMMS allocations always exist?
\end{open}

\noindent
\newtext{In fact, any PMMS allocation must be EFX, and thus showing the existence of PMMS allocations would also imply the existence of EFX allocations (Open Problem~\ref{open:EFX}).} For approximate PMMS, the best known result is $0.781$ by  \citet{Kurokawa17}. 

An even stronger notion, which implies both MMS and PMMS, is that of {\em groupwise maximin share fairness} (GMMS) defined by \citet{BBMN18}, and which requires that the maximin share guarantee is simultaneously achieved for any possible subset of agents. 

\begin{defn}[$\alpha$-GMMS]
Let $\alpha \in (0,1]$. An allocation $A$ is $\alpha$-GMMS if $v_i(A_i)\geq \alpha\cdot \GMMS_i$ for all $i \in N$, where 
$$\GMMS_i = \max_{S \subseteq N: i\in S} \mu_i^{|S|} (\bigcup_{j \in S} A_j).$$
When $\alpha=1$, the allocation is called GMMS.
\end{defn}

\noindent
\citet{BBMN18} showed that GMMS allocations exist for some restricted settings, such as when the agents have  binary or identical values. They also showed that any EFL allocation is $1/2$-GMMS, and thus such an allocation can be computed efficiently. The currently best known approximation of GMMS is $4/7$  \citep{ANM2019,CKMS20}. The implication relations between all the aforementioned notions has been used many times to show that particular algorithms have guarantees that hold for multiple notions at once. We refer the reader to the paper of \citet{ABM18} for a discussion of the relations between (approximate versions of) these notions.

\begin{open}\label{open:alpha-GMMS}
What is the best possible $\alpha$ for which $\alpha$-GMMS allocations exist? 
\end{open}

\subsection{Prop1, PropX and PropM}

A line of work has also focused on relaxations of proportionality that are similar in essence to EF1 and EFX. 
\citet{conitzer2017prop1} defined the notion of {\em proportionality up to one good} (Prop1) according to which each agent could obtain her proportional share if given one extra good. 
\cite{aziz2020propx} defined {\em proportionality up to any good} (PropX) that demands that each agent can obtain her proportional share when given the least positively-valued good among those allocated to other agents.

\begin{defn}[Prop1]
An allocation $A$ is proportional up to one good (Prop1) if, for every agent $i \in N$, there exists a good $g \in M\setminus A_i$ such that $v_i(A_i \cup \{g\})\geq v_i(M)/n$.
\end{defn}

\begin{defn}[PropX]
An allocation $A$ is proportional up to any good (PropX) if, for every agent $i \in N$ and any good $g \in M\setminus A_i$, we have $v_i(A_i \cup \{g\})\geq v_i(M)/n$.
\end{defn}

\noindent
An allocation that is Prop1 and PO always exists \citep{conitzer2017prop1} and can be computed in polynomial time \citep{Barman2019prop1}. 
PropX is rather demanding and it cannot always be guaranteed, as shown by~\citet{aziz2020propx} via the following simple example (see Table~\ref{table:example:no-PropX}). It can be verified that there must exist an agent $i$ that receives at most one item. Moreover, there exists a good $g \not\in A_i$ whose inclusion gives agent $i$ a total value of $4$, which is lower than $13/3$, the proportional share of agent $i$.

\begin{table}[ht]
    \centering
    \begin{tabular}{c|ccccc}
        & $g_1$ & $g_2$ & $g_3$  & $g_4$  & $g_5$ \\ 
        \hline
        $a_1$ & $3$ & $3$ & $3$  & $3$  & $1$ \\ 
        $a_2$ & $3$ & $3$ & $3$  & $3$  & $1$ \\ 
        $a_3$ & $3$ & $3$ & $3$  & $3$  & $1$ \\ 
        \hline
    \end{tabular}
    \caption{An instance that does not admit any PropX allocation.}
    \label{table:example:no-PropX}
\end{table}

Recently, \citet{baklanov2021propm1,baklanov2021propm2} introduced the notion of {\em proportionality up to the maximin good} (PropM) which provides a middle-ground between Prop1 and PropX.
Informally, given an allocation $A$, the value of a maximin good for agent $i$ is 
$$
d_i(A) = \max_{j \neq i} \min_{g\in A_j} v_i(g),
$$
and $A$ is called PropM if $v_i(A_i) + d_i(A) \ge v_i(M)/n$.
\citet{baklanov2021propm2} showed that a PropM allocation always exists and can be computed in polynomial time. 

\subsection{Equitability and its Relaxations}
Besides envy-freeness and proportionality, another important fairness notion is that of {\em equitability} according to which all agents must derive the same value from the bundles they are allocated. 

\begin{defn}[Equitability \citep{books/daglib/0017730}]
An allocation $A=(A_1,\dots,A_n)$ is {\em equitable} (EQ) if $v_i(A_i) = v_j(A_j)$ for every pair of agents $i,j \in N$.
\end{defn}

\noindent
It is not hard to observe that an EQ allocation may not exist, even when there is just one good and two agents. 
Similarly to EF1 and EFX, we can relax EQ to EQ1 and EQX.

\begin{defn}[EQ1]
An allocation $A=(A_1,\dots,A_n)$ is {\em equitable up to one good} (EQ1) if, for every pair of agents $i,j \in N$, it holds that $v_i(A_i) \ge v_j(A_j\setminus\{g\})$ for some $g\in A_j$.
\end{defn}

\begin{defn}[EQX]
An allocation $A=(A_1,\dots,A_n)$ is {\em equitable up to any good} (EQX) if, for every pair of agents $i,j \in N$, it holds that $v_i(A_i) \ge v_j(A_j\setminus\{g\})$ for any $g\in A_j$.
\end{defn}

\noindent 
Clearly an allocation that is EQX is also EQ1. \citet{GourvesMT14} considered EQX allocations under the alias of {\em nearly jealousy-free}, and showed that such allocations (and thus also EQ1 allocations) always exist and can be computed in polynomial time. 

%%%%%%%%%%%%%%%%%%%%%%%%%%%
\section{Beyond Fairness: Efficiency and Incentives}
\label{sec:efficiency&truthfulness}
%%%%%%%%%%%%%%%%%%%%%%%%%%%

Here we consider other interesting directions like the relation between fairness and efficiency, or the possibility of achieving fairness when the agents are strategic. 

\subsection{Fair and Pareto Optimal Allocations} \label{sec:fair-PO}
There is a significant line of work that considers the question of whether it is possible to simultaneously achieve fairness and \emph{efficiency}. A common type of efficiency is that of PO (see Definition~\ref{def:po}), which, as we already discussed, can be guaranteed in conjunction with some fairness notions, like EF1 and Prop1. 
In particular, \citet{CaragiannisKMPS19} showed that any allocation that maximizes the Nash welfare is EF1 and PO (Theorem~\ref{thm:MNW:EF1+PO}). However, since the computation of Nash welfare maximizing allocations is NP-hard, this result only shows the existence of EF1+PO allocations.

For the computation of allocations that are fair and PO, a commonly used technique in the literature is to exploit the connection between fair division and {\em market equilibria} in {\em Fisher markets} \citep{Budish11}, which has been widely used in the design of approximation algorithms for maximizing the Nash welfare~\citep{journals/siamcomp/ColeG18,conf/sigecom/0001DGJMVY17}.
We briefly introduce this framework.

Similarly to a fair division instance, a Fisher market consists of a set of agents $N$, a set of goods $M$, and each agent $i \in N$ has a value $v_i(g)$ for every good $g \in M$. In contrast to discrete fair division, each agent $i$ also has a {\em budget} $b_i \geq 0$, and the goods can be fractionally allocated. A {\em fractional allocation} is a tuple of vectors $X=(x_i)_{i \in N}$, where $x_i = (x_{i1},\ldots,x_{im})$ is the allocation of agent $i$; in particular, $x_{ig} \in [0,1]$ denotes the fraction of $g$ that is allocated to $i$, such that $\sum_{i \in N} x_{ig} \leq 1$ for every $g \in M$. We will be focusing on Fisher markets with linear utilities, where, given a fractional allocation $X$, the value of agent $i$ value is $v_i(x_i) = \sum_{g\in M} x_{ig} \cdot v_i(g)$. 
A {\em market outcome} is a tuple $(X,p)$, where $X$ is a fractional allocation and $p=(p_1,\dots,p_m)$ defines the price $p_j$ for item $j$. 
An outcome $(X,p)$ is a {\em market equilibrium} if it satisfies the following conditions:
\begin{enumerate}
    \item The market clears (all items are completely allocated): 
    $\sum_{i\in N} x_{ig} = 1$ for every $g\in M$ with $p_g >0$;
    \item The budgets of all agents are exhausted: 
    $\sum_{g\in M} p_g \cdot x_{ig} = b_i$ for every $i\in N$;
    \item The agents only spend money on the goods with maximum value per unit of money spent (bang-per-buck):
    For any $g \in M$, $x_{ig}>0$ implies $\frac{v_i(g)}{p_g} \ge \frac{v_i(g')}{p_{g'}}$ for every $g'\in M$.
\end{enumerate}

An allocation is {\em fractionally Pareto optimal} (fPO) if it is not Pareto dominated by any fractional allocation.
Clearly, an fPO allocation is also PO.
It is known that a market equilibrium leads to an fPO allocation.
Moreover, when all agents have the same budget (that is, we have a {\em competitive equilibrium from equal incomes} (CEEI)), the allocation is EF \citep{Varian74}.
Therefore, if a market equilibrium gives an integral allocation, this allocation is EF and fPO. However, this rarely happens.

Many works have used Fisher markets as a proxy. For example, \citet{BarmanKV18} and \citet{Barman2019prop1} compute allocations by perturbing the budgets or the values in the market so that they admit market equilibria with integral allocations. Based on this, \citet{BarmanKV18} designed a pseudo-polynomial time algorithm to compute an allocation that is EF1 and PO, and \citet{Barman2019prop1} designed a strongly polynomial time algorithm to compute an allocation that is PROP1 and PO. \citet{garg2022computing} improved the result of \citet{BarmanKV18} by designing a pseudo-polynomial time algorithm to compute an EF1 and fPO allocation.
\citet{GargM21} showed that in bi-valued instances, an EFX and PO allocation can be computed in polynomial-time, but in instances with three distinct values, EFX and PO are incompatible.
\citet{conf/ijcai/FreemanSVX19} showed that, if all values are strictly positive, there always exists a EQX and PO allocation. On the negative side, using Example~\ref{ex:no-EQ1&PO}, they showed that there do not exist EQ1 and PO allocations when the values can be zero. \citet{garg2022computing} strengthened this result by providing a pseudo-polynomial time algorithm to compute an EQ1 and PO allocation.

\begin{ex}\label{ex:no-EQ1&PO}
\begin{table}[ht]
    \centering
    \begin{tabular}{c|cccccc}
        & $g_1$ & $g_2$ & $g_3$  & $g_4$  & $g_5$ & $g_6$ \\ 
        \hline
        $a_1$ & $1$ & $1$ & $1$  & $0$  & $0$ & $0$ \\ 
        $a_2$ & $0$ & $0$ & $0$  & $1$  & $1$ & $1$ \\ 
        $a_3$ & $0$ & $0$ & $0$  & $1$  & $1$ & $1$ \\ 
        \hline
    \end{tabular}
    \caption{An instance that does not admit any EQ1 and PO allocation~\citep{conf/ijcai/FreemanSVX19}.}
    \label{table:example:no-EQ1&PO}
\end{table}

Consider the instance given in Table~\ref{table:example:no-EQ1&PO}.
In any PO allocation, $\{g_1,g_2,g_3\}$ must be assigned to agent $a_1$, and the remaining goods $\{g_4,g_5,g_6\}$ are shared between agents $a_2$ and $a_3$. Consequently, one of $a_2$ and $a_3$ is allocated at most one good, which violates the requirement of EQ1 with agent $a_1$. 
\hfill $\qed$
\end{ex}

\paragraph{\newtext{Approximating the Nash welfare.}}
\newtext{Since MNW allocations are both EF1 and PO (see Section~\ref{sec:EF1}), it is worth mentioning at this point that the problem of efficiently computing approximate maximum Nash welfare allocations has been among the most active ones in computational social choice. \citet{garg2017satiation} showed that approximating the Nash welfare within a ratio better than $1.069$ is NP-hard, even for additive valuation functions with only four values; in fact, it is APX-hard to approximate the Nash welfare even when there only two values~\citep{conf/aaai/AkramiC0MSSVVW22}. For additive valuations, the best known approximation is $e^{1/e}+\epsilon\approx 1.45$ \citep{BarmanKV18}. Closing the gap is an interesting open question. It should be noted, however, that approximate Nash welfare allocations are not known to be even approximately EF1.}

\newtext{When all agents have submodular valuation functions (see Section~\ref{ssec:general-valuations} for a formal definition of submodular, XOS and subadditive functions), the best known approximation of Nash welfare is $6+\epsilon$~\citep{journals/corr/GargHLVV22} and the best known lower bound is $1.5819$~\citep{conf/soda/GargKK20}.
Under the value oracle model, there is an assymptotically tight approximation ratio of $\Theta(n)$ for both XOS and subadditive functions~\citep{conf/esa/BarmanBKS20,conf/aaai/ChaudhuryGM21}. With stronger oracle assumptions, a sublinear approximation ratio is possible for XOS functions~\citep{journals/corr/BarmanKKN21}. Whether sublinear approximation ratios are possible for subadditive functions with stronger oracle assumption remains unknown.}

\paragraph{\newtext{Egalitarian welfare.}}
\newtext{Another common welfare function that is considered as a fairness criterion itself is the {\em egalitarian welfare}, defined as the minimum utility among all agents (see, e.g., \citep{DBLP:journals/sigecom/BezakovaD05}). Aiming to maximize the egalitarian welfare is a known APX-hard problem known as the {\em Santa Claus} problem in the literature (e.g., see \citep{bansal2006santa,feige2008allocations,lenstra1990approximation}). While egalitarian welfare maximizing allocations are clearly PO, they may not achieve any constant approximation of important relaxations of EF and Prop, such as EF1 and MMS.}

\subsection{Price of Fairness}
Another natural goal is to (approximately) maximize some objective function of the values of the agents, such as the {\em social welfare}, i.e., the total value of the agents for the goods they receive. 
Formally, the social welfare of an allocation $A$ is $\sw(A) = \sum_{i\in N}v_i(A_i)$.
\citet{bertsimas2011efficiency} and \citet{caragiannis2012efficiency} defined the {\em Price of Fairness}, a measure which, similarly to the approximation ratio in worst-case analysis, measures the deterioration of the objective due to the fairness requirement (which may refer to any fairness notion).
Given an instance $I$, we denote by $\opt(I)$ the maximum social welfare over all allocations of $I$. 
Let $F$ be a fairness criterion (such as EF1 and MMS) and $F(I)$ be the set of all allocations satisfying $F$.

\begin{defn}[Price of Fairness (PoF)]
	The price of fairness with respect to fairness criterion $F$ is
	$$
	\PoF(F) = \sup\limits_{I} \min\limits_{A \in F(I)} \frac{\opt(I)}{\sw(A)}.
	$$
\end{defn}
\noindent
If no fair allocation can achieve non-zero welfare, the price of fairness is infinite (unbounded).
The price of fairness with respect to fairness criterion $F$ is also called the {\em price of $F$}. \citet{BeiLMS21} were the first to consider the price of fairness with respect to relaxations of EF. For instances with two agents, using
Example~\ref{ex:PoS-EF1}, they showed that the price of EF1 is at least $8/7$.

\begin{ex}\label{ex:PoS-EF1}
Consider the instance given in Table~\ref{table:example:PoEF1:2}, where $\varepsilon > 0$ is arbitrarily small.

\begin{table}[ht]
    \centering
\begin{tabular}{c|ccc}
        & $g_1$ & $g_2$ & $g_3$ \\\hline
$a_1$   &   $1/3 - 2\varepsilon$  & $1/3 + \varepsilon$     & $1/3 + \varepsilon$     \\
$a_2$   &   0   & $1/2$    & $1/2$     \\\hline
\end{tabular}
\caption{A lower bound instance for the price of EF1 for two agents.}
\label{table:example:PoEF1:2}
\end{table}

\noindent
The optimal social welfare in this instance is $4/3-2\varepsilon$, achieved by allocating $g_1$ to $a_1$ and other two goods to $a_2$. However, in any EF1 allocation the last two goods cannot both be given to $a_2$. Hence, the social welfare of an EF1 allocation is at most $(1/3 - 2\varepsilon) +(1/3 +\varepsilon) + 1/2 = 1/6-\varepsilon$. Taking $\varepsilon \to 0$, the price of EF1 is at least $8/7$.
\hfill $\qed$
\end{ex}

An upper bound on the price of EF1 follows by a variant of the Adjusted-Winner algorithm \citep{books/daglib/0017730}.
The idea is to sort the goods according to the ratios between the values that they yield to the two agents:
\begin{align*}
     \frac{v_{1}(g_1)}{v_{2}(g_1)} \ge \frac{v_{1}(g_2)}{v_{2}(g_2)} \ge \ldots \ge \frac{v_{1}(g_m)}{v_{2}(g_m)}.
\end{align*} 
Intuitively, the first goods according to this ordering are more valuable to the first agent while the last ones are more valuable to the second agent.
\citet{BeiLMS21} proved that, if a minimal set of consecutive goods from left is allocated to the first agent such that the allocation is EF1 for her and the remaining goods are given to agent 2, then the allocation is EF1 with social welfare that is within a $\sqrt{3}/2$-fraction of the optimal social welfare.

For arbitrary number of agents, \citet{BeiLMS21} showed that a lower bound of $\Omega(\sqrt{n})$ on the price of EF1; \citet{BarmanB020} managed to give a tight upper bound of $O(\sqrt{n})$.
Furthermore, \citet{BarmanB020} showed tight bounds for other fairness notions, in particular, for $1/2$-MMS and Prop1. \citet{halpern2021limited} showed tight bounds on the price of EF1 and of approximate MMS under the constraint of having only ordinal information about the agent values, a typical assumption made in the context of \emph{distortion} in social choice~\citep{distortion-survey}.

\subsection{Fair Division with Strategic Agents}
\label{subsec:strategic}
Most of the papers mentioned so far, studied the discrete fair division problem from an algorithmic perspective under the assumption that the agents are non-strategic. In the \emph{strategic} version of the problem, an agent may intentionally misreport how she values the goods in order to end up with a better bundle. This introduces an additional layer of difficulty, as the goal is to produce fair allocations \emph{according to the true values} of the agents, while their declarations might be far from being the truth. This version of the problem has been considered mostly from a mechanism design \emph{without monetary transfers} perspective, in which the utility of an agent is defined as her (true) value for her bundle. 
    
A first direction was the design of \emph{truthful} mechanisms that are also fair. 
A mechanism is truthful if no agent has an incentive to lie, i.e., no matter what values are reported by the other agents, reporting her true values always brings an agent value no smaller than what she obtains when reporting false ones.
\citet{CKKK09} showed that no truthful mechanism for two agents and two goods can always output allocations of minimum envy. \citet{ABM16} revisited the problem for the case of two agents and any number of goods, and showed that no truthful mechanism can always output $\alpha$-MMS allocations, for $\alpha > 2/m$. 
A characterization of truthful mechanisms for two agents, showing that truthfulness and fairness are incompatible (in the sense that there is no truthful mechanism with bounded fairness guarantees under \emph{any} meaningful fairness notion) was provided by \cite{ABCM17}. This impossibility, however, does not apply to restricted cases. For binary valuations, maximizing the Nash welfare results in a polynomial-time truthful mechanism that outputs EF1 and PO allocations, see \citep{0002PP020, bogomolnaia2004random, ABFHV21}. \citet{BabaiEF21} showed an analogous result with respect to MMS. In fact, \citeauthor{BabaiEF21} also showed that for the submodular analog of binary valuations, there is a truthful mechanism that always outputs EFX allocations.
    
The aforementioned impossibility results led to a different direction, where the focus was shifted to the stable states of non-truthful mechanisms. In particular, \cite{ABFLLR21} studied mechanisms that always have \emph{pure Nash equilibria}, and showed that every allocation that corresponds to an equilibrium of Round-Robin is EF1 with respect to the (unknown) true values of the agents. Qualitatively similar results can be obtained for approximate pure Nash equilibria, even for agents with submodular valuation functions \citep{ABLLR23}. 
\citet{conf/ecai/BouveretL14} and \citet{conf/aaai/AzizBLM17} studied the strategic setting for general sequential allocation algorithms. 

\begin{open}
Are there mechanisms that always have pure Nash equilibrium allocations with stronger guarantees than EF1?
\end{open}

Another direction to escape the impossibility is to relax the requirements of truthfulness to {\em not obviously manipulability} \citep{journals/corr/abs-2206-11143}. A mechanism is not obviously manipulable (NOM) if no agent can increase her best- and worst-case value by lying. Fortunately, Round-Robin algorithm is NOM, and thus EF1 and NOM are compatible. Furthermore, \citet{journals/corr/abs-2206-11143} showed that we can achieve EF1, PO and NOM simultaneously. 

\section{Different Settings}
\label{sec:other_settings}

In this last section, we briefly discuss further meaningful discrete fair division settings.

\subsection{Limited Information}
There is also an increasing interest in settings with partial information, and particularly when only ordinal information is available. Even though the agents have cardinal values for the goods, the algorithm may only have access to each agent's ranking over the goods (induced by the values in non-increasing order). 
Given only the ordinal preferences of the agents, we cannot run algorithms such as Envy-Cycle Elimination which need to calculate and compare the values of the agents. 
Instead, sequential-picking algorithms become more powerful as they do not directly rely on the values. 

It is not hard to obtain an EF1 allocation as the Round-Robin algorithm only needs to know which of the available goods every agent values the most in each round.  
However, it becomes harder to compute approximate MMS allocations.
Let $H_n =  \Theta(\ln n)$ be the $n$-th harmonic number.
\citet{halpern2021limited} showed that with only ordinal information, it is impossible to achieve better than $1/H_n$-MMS, while in previous work, \citet{ABM16} showed that $1/2H_n$-MMS allocations can be computed. 
\citet{conf/aaai/HosseiniSVX21} showed the existence of PO and MMS or EFX allocations when agents have lexicographic preferences. 

Another interesting question is to investigate the query complexity of unknown valuations.
In this model, the algorithm can access the valuations by making queries to an oracle.
\citet{journals/siamdm/OhPS21} proved that $\Theta(\log m)$ queries suffice to define an algorithm that returns EF1 allocations. In general, it is an important research direction to explore how much information about the valuations of the agents is sufficient to design algorithms with strong fairness guarantees.

\subsection{General Valuations} \label{ssec:general-valuations}
Besides additive valuations, there are classes of valuation functions (such as submodular, fractional subadditive (XOS), and subadditive~\citep{conf/sigecom/Nisan00}) which capture more complex and combinatorial preferences.
A valuation function $v_i$ is 
\begin{itemize}
    \item {\em submodular} if $v_i(S) + v_i(T) \ge v_i(S \cup T) + v_i(S \cap T)$ for all $S$ and $T$;
    \item {\em subadditive} if $v_i(S)+v_i(T) \ge v_i(S \cup T)$ for all $S$ and $T$;
    \item {\em XOS} if there is a finite number of additive functions $a_1,\ldots,a_k$ such that $v_i(S) =\max_{\ell\in [k]} a_\ell(S)$.
\end{itemize}
It is known that any additive function is also submodular, any submodular function is XOS, and any XOS function is subadditive; all these implications are strict. 

The Envy-Cycle Elimination algorithm returns an EF1 allocation even when the valuation functions of the agents are as general as possible~\citep{LMMS04}.
However, for MMS allocations, if the functions are not restricted, no approximation guarantee can be achieved as shown in Example~\ref{ex:MMS-general}.

\begin{ex}\label{ex:MMS-general}
\begin{table}[h]
    \centering
    \begin{tabular}{c|ccccc}
                & $S_1 = \{g_1,g_2\}$ & $S_2 = \{g_3,g_4\}$ & $S_3 = \{g_1,g_3\}$ & $S_4 = \{g_2,g_4\}$ & $S \neq S_1,S_2,S_3,S_4$\\
        \hline
        $a_1$   & 1     & 1     & 0     & 0     & 0  \\
        $a_2$   & 0     & 0     & 1     & 1     & 0  \\
        \hline
    \end{tabular}
    \caption{A hard instance with general valuations.}
    \label{table:variants:general}
\end{table}

Consider the instance given in Table~\ref{table:variants:general}~\citep{BouveretL16,GhodsiHSSY22}.
It is not hard to check that the MMS values of the two agents are $\mu_1^2 = \mu_2^2 = 1$. But, in any allocation there is at least one agent with value $0$, leading to an unbounded approximation ratio with respect to MMS.
\hfill $\qed$
\end{ex}

Therefore, most of the literature has focused on more structured valuations.
\citet{BarmanK20} first proved that the Round-Robin algorithm returns a $0.21$-approximate MMS fair allocation for submodular valuations. The guarantee was improved to $1/3$ by \citet{GhodsiHSSY22}, who complemented it with an upper bound of $3/4$.
For XOS valuations, \citet{GhodsiHSSY22} proved the existence of $1/5$-MMS allocations, and designed a polynomial-time algorithm to compute $1/8$-MMS allocations.  The existence result was improved to $1/4.6$ by \citet{seddighin2022improved}.
On the negative side, \citet{GhodsiHSSY22} showed that no algorithm can do better than $1/2$-approximation for all XOS valuations.
For subadditive valuations, \citet{seddighin2022improved} proved the existence of a $\Omega({1}/(\log n \log \log n))$-MMS allocation, which exponentially improves the $\Omega({1}/{\log m})$ guarantee that was shown by \citet{GhodsiHSSY22}.
We summarize the best known results for approximate MMS allocations in Table~\ref{tab:general-valuations:MMS}.

\begin{table}[ht]
    \centering
    \begin{tabular}{c|cc}
                & Lower bound       & Upper bound \\\hline
    Additive    & $3/4 + 1/(12n)$    & $39/40$ \\
    Submodular  & $1/3$ & $3/4$\\   
    XOS         & $1/4.6$ & $1/2$ \\
    Subadditive & $\Omega\left(1/\left(\log n \log \log n\right)\right)$  & $1/2$ \\\hline
    \end{tabular}
    \caption{Best known approximations of MMS for general valuations based on the works of \citet{GT21}, \citet{FST21}, \citet{GhodsiHSSY22}, and \citet{seddighin2022improved}.}
    \label{tab:general-valuations:MMS}
\end{table}

Regarding EFX allocations, \citet{PR18} proved that $1/2$-EFX allocations always exist for general valuations, and \citet{ChanCLW19} showed that computing such allocations can be done in polynomial time. Recently, \citet{journals/corr/abs-2205-07638} proved that, for instances with three agents, if one of them has an additive valuation and the others have general valuations, an exact EFX allocation always exists. 

\subsection{Arbitrary Entitlements} 
So far, we assumed that all agents have equal entitlements over the goods. However, there are settings where the fairness of an allocation must be considered with respect to asymmetric entitlements; for example, in many inheritance scenarios, closer relatives have higher entitlements than distant ones. 
Formally, in the \emph{weighted} or \emph{asymmetric} setting, each agent $i$ has an entitlement $0 < s_i < 1$ such that $\sum_{i\in N} s_i = 1$.
The traditional setting with symmetric agents is a special case when $s_i = 1/n$ for every $i$.
To capture fairness in the presence of arbitrary entitlements, one can generalize existing notions to their weighted counterparts, like \emph{weighted} EF1 and \emph{weighted} MMS. 

\begin{defn} [Weighted EF and EF1]
An allocation $A$ is {\em weighted envy-free} (WEF) if ${v_i(A_i)}/{s_i} \geq {v_i(A_j)}/{s_j}$ for every pair of agents $i, j \in N$. 
It {\em weighted envy-free up to one item} (WEF1) if ${v_i(A_i)}/{s_i} \geq {v_i(A_j \setminus\{g\})}/{s_j}$ for every pair of agents $i, j \in N$ and some $g\in A_j$.
\end{defn}

\noindent
\newtext{The concept of weighted fairness is based on ideas that were previously presented in, for example, \citep{Steinhaus49,books/daglib/0017730}.}
\citet{chakraborty2021weighted} proved that WEF1 allocations always exist and can be computed by a weight-dependent sequential picking algorithm that generalizes the Round-Robin algorithm.

Adapting MMS to the weighted setting is not as straightforward as WEF1.
The underlying idea is to define a {\em fairness ratio} for each agent.
Ideally, when the items are divisible, the fairest partition for agent $i$ is such that $v_i(A_j) = s_j \cdot v_i(M)$ for all $j$ (i.e., it is weighted proportional).
However, since the items are not divisible, it is inevitable to induce some degree of unfairness and the best fairness ratio for agent $i$ is
$$
F_i = \max_{A \in \mathcal{A}_n(M)} \min_{j \in N} \left\{ \frac{v_i(A_j)}{s_j \cdot v_i(M)} \right\}.
$$
Accordingly, the weighted MMS can be defined as follows.

\begin{defn} [Weighted MMS~\citep{FarhadiGHLPSSY19}]
An allocation $A$ is {\em weighted MMS} (WMMS) if $v_i(A_i) \geq \mu_i^n = F_i \cdot s_i \cdot {v_i(M)}$ for every agent $i \in N$. For any $\alpha \in [0,1]$, an allocation $A$ is $\alpha$-WMMS if $v_i(A_i) \geq \alpha\cdot \mu_i^n$ for every agent $i \in N$.
\end{defn}

\noindent
It can be verified that $F_i \le 1$ and $\mu_i^n \le s_i \cdot {v_i(M)}$.
Moreover, when the agents are symmetric, WMMS coincides with MMS.
\citet{FarhadiGHLPSSY19} proved that with arbitrary entitlements, the best approximation ratio of WMMS is $1/n$, which is somewhat surprisingly guaranteed by the Round-Robin algorithm. 

Novel fairness notions which highlight different perspectives of the weighted setting, such as {\em $\ell$-out-of-$d$ share} and {\em AnyPrice} share (APS), were proposed and studied by \citet{BabaioffNT21,BabaioffEF21}. 
\newtext{In particular, APS is defined as the maximum value an agent can guarantee to herself if she has a budget equal to her entitlement and the goods are adversarially priced (with prices that sum up to 1). In the unweighted setting, the APS value is at least as much as that of MMS, and sometimes it is strictly larger.}
\citet{BabaioffEF21} showed how to efficiently compute an allocation where every agent gets value no less than $3/5$ of her APS.
\citet{aziz2020propx} considered weighted PROP1 and presented an algorithm for computing a weighted PROP1 and PO allocation for indivisible items.

\subsection{Group Fairness}
In the model we discussed in the main part of the survey, we care about individual fairness.
However, there are practical applications where it makes more sense for the allocation to be group-wisely fair.
Depending on whether the groups of agents pre-exist or not, there are two lines of research on group fairness.
When the groups of agents do not pre-exist, we aim to compute allocations that are fair for any group of agents. 
For divisible items, \citet{berliant1992fair} defined the notion of {\em group envy-freeness} (GEF) by restricting envy-freeness to hold for groups of agents with equal size; clearly, this is strictly stronger than the requirement for EF to hold for pairs of agents. \citet{conitzer2019groups} adapted and extended GEF for indivisible items, leading to the notion of group-fairness (GF), which takes into account groups of different size. 

\begin{defn}[GF]
An allocation $A$ is {\em group-fair} (GF) if, for any non-empty sets of agents $S, T \subseteq N$ and every partition
$(B_i)_{i\in S}$ of $\bigcup_{j\in T} A_j$, $( (|S|/|T|)\cdot v_i(B_i))_{i\in S}$ does not Pareto
dominate $(v_i(A_i))_{i\in S}$.
\end{defn}

From the above definition, we can see that the two extreme cases where $|S| = |T| = 1$ and $|S|=|T| =n$ imply the traditional requirements of EF and PO. \citet{conitzer2019groups} and \citet{conf/ijcai/0001R20} introduced different variants of ``up to one'' relaxations of GEF/GF and proved existential and hardness results.

When the agents are previously partitioned into several groups (e.g., each group might correspond to a family), then we only need to satisfy each given group of agents. Several models capturing scenarios along these lines have been considered in the literature. \citet{suksompong2018groups} focused on a setting where each agent derives full value from all the goods allocated to the group she belongs to, and showed bounds on the best possible approximation of MMS. \citet{kyropoulou2020groups} considered EF1 and EFX allocations in the same setting, as well as in settings with dynamic group formation; some of these results were later improved by \citet{manurangsi2021groups} using ideas from discrepancy theory. \citet{halevi2019democratic} focused on the case of democratic fairness, where the goal is to compute allocations that are considered fair (e.g., satisfying EF1) by a large fraction of the agents in each group.

\subsection{Randomness in Fair Division}
\label{sec:randomness}
\paragraph{Best of both worlds.} 
We mentioned in the introduction that randomization can be used as a tool to circumvent the impossibilities in the task of allocating indivisible resources. The guarantee that one can obtain this way however is that of \emph{ex-ante} EF (EF with respect to the expected values). In fact, in the absence of any other requirement, this is rather trivial; simply allocate the goods to the agents uniformly at random. 
On the other hand, this sort of approach cannot help with achieving \emph{ex-post} fairness, i.e., fairness in each resulting allocation, for any of the deterministic notions of fairness mentioned in this survey. %On the other hand, achieving 
Therefore, it seems natural to ask for a randomized allocation algorithm that also has good ex-post fairness guarantees.

As we have seen, Round-Robin algorithms return an EF1 allocation, but the agents have an advantage over those who are ordered after them. To eliminate this unfair advantage, we can consider a randomized version where the ordering of the agents is picked at random. \citet{freeman2020rand} showed that this randomized Round-Robin algorithm guarantees ex-ante PROP but not ex-ante EF.

The Probabilistic-Serial (PS) algorithm of~\citet{journals/jet/BogomolnaiaM01} is a randomized algorithm for allocating indivisible items in an ex-ante EF manner. Agents eat their most preferred good at a uniform rate and, once consumed, move on to the next available good. 
The algorithm terminates when all goods have been consumed, and the probability share of an agent for a good is the fraction of the good eaten by the agent. 
\citet{freeman2020rand} presented a variant of this procedure, the recursive probabilistic serial algorithm, that maintains the ex-ante EF property, and moreover every realized allocation is ex-post EF1.
Later, \citet{aziz2020rand} showed that the random allocation generated by the original probabilistic serial algorithm is equivalent to lottery over a set of EF1 allocations.
\citet{journals/corr/abs-2008-08991} presented an eating algorithm that is suitable for any type of feasibility constraint and allocation problem with ordinal preferences. 

Finally, in a somewhat different direction, \citet{caragiannis2021interim} studied \emph{interim} envy-freeness, a notion which lies between ex-ante and ex-post envy-freeness. They showed positive and impossibility results for many settings, including one-sided matching, where each agent can be given only one good.  

\paragraph{Stochastic Settings.}
Although it is easy to construct instances where an envy-free allocation does not exist, there is a line of research showing that when the values are randomly drawn from some probability distributions (instead of being chosen adversarially), an envy-free allocation is likely to exist as long as the number of items is sufficiently large compared to the number of agents.

In particular, \citet{conf/aaai/DickersonGKPS14} showed that when the values of the agents are independently drawn from an identical distribution, the social welfare maximizing allocation (that is, allocating each good to the agent who has the highest value for it) is envy-free with high probability when $m = \Omega(n\log n)$.
\citet{journals/siamdm/ManurangsiS20,journals/siamdm/ManurangsiS21} further improved the lower bound on the existence of envy-free allocations, and showed that the Round-Robin algorithm produces an envy-free
allocation for a slightly lower $m$.
Recently, \citet{bai2022envy} extended this result to the case of asymmetric distributions.  

In a different direction, \citet{bai2022fair} showed that given a worst-case instance that does not admit any envy-free allocation, randomly perturbing the values of each agent leads to the existence of envy-free allocations with high probability.

\subsection{Online Fair Division}
Most models we have discussed are static, as all goods, agents, and their valuation functions do not change over time. Online fair division considers settings where the agents or the goods arrive in an online manner.
We briefly mention some of the most related works; we refer the readers to \citet{AleksandrovAGW15,aleksandrov2020onlinesurvey} for a more detailed discussion.  
In the most common model, $T$ items arrive online; when an item arrives, the values of the agents for it are realized and based on these we need to decide immediately and irrevocably how to allocate it, typically without knowing future events (including how many items will arrive later and the values of the agents for these items).

Let $\textrm{Envy}_T$ be the cumulative envy until time $T$, i.e., the maximum difference between any agent’s value for goods allocated to another agent and to herself in the first $T$ rounds. 
Then, it is desired that the envy can vanish as time goes on, that is, $\textrm{Envy}_T/T \to 0$ when $T$ goes to infinity. \citet{BenadeKPP18} proved that by allocating the new good to an agent chosen uniformly at random, $\textrm{Envy}_T = \Tilde{O}(\sqrt{T/n})$, and thus we can indeed eliminate the envy over time.
\citet{BenadeKPP18} also showed that there exists a deterministic polynomial-time algorithm with the same envy bound as the random allocation algorithm, and that this bound is asymptotically optimal against an adaptive adversary (who can see the random bits of the algorithm through the first $T$ goods and decide the value $v_{iT}$), meaning that the allocation is far from being EF1. To circumvent this obstacle, \citet{HePPZ19} relaxed the requirement of ``irrevocable decision'' by allowing the algorithm to reallocate previously allocated items. Clearly, if we reallocate all goods at each time (i.e., with $\Theta(T^2)$ reallocations), we can surely achieve EF1, but this makes the online setting meaningless. 
\citet{HePPZ19} showed that with two agents we actually only need $\Theta(T)$ reallocations; for more than two agents, $O(T^{3/2})$ reallocations are sufficient and $\Omega(T)$ are necessary. 

The stochastic setting of online fair allocation is also well-studied.  
Actually, the algorithms by \citet{conf/aaai/DickersonGKPS14}, \citet{KPW16} and \citet{bai2022envy} that we have introduced in Section \ref{sec:randomness} for allocating goods with random values do not need to access the order of goods and also work for the online setting. 
Thus, we can achieve envy-freeness with high probability by allocating each arriving good to the agent with the highest value. 
\citet{zeng2020dynamic} further showed that, with some modification, the returned allocation is either EF1, or envy-free with high probability. 
\citet{zeng2020dynamic} also investigated the trade-off between fairness and efficiency when the adversary has different levels of power.

It remains an open question whether there exist competitive online algorithms for the computation of (approximately) MMS or Prop1/PropX allocations. For the case of identical valuation functions, approximate MMS allocations correspond to maximizing the minimum load on the job scheduling problem, for which optimal competitive ratios have been proved by~\citet{AzarE97} and~\citet{TanW07}.
The alternative model that considers a fixed set of resources and agents who arrive or depart over time has not been considered for indivisible resources, partially because it is very challenging to achieve positive results \citep{KashPS14}.

\subsection{Subsidies}
As we saw in Section \ref{subsec:strategic}, even in a game-theoretic setting no monetary transfers are allowed in fair division problems. Indeed, arbitrary payments would significantly alter the flavor of these problems and often go against their motivation. A recent line of work, however, considers the question of whether it is possible to pay the agents just a small amount of money (subsidy) on top of a given allocation in order to make it envy-free (when the subsidies are also taken into consideration). Allocations for which this can be done are called \emph{envy-freeable}.
Formally, a pair $(A,p)$ consisting of an allocation $A=(A_1,\dots,A_n)$ and a payment profile $p=(p_1,\dots,p_n)$, where $p_i \ge 0$ for every $i$, is envy-free if $v_i(A_i)+p_i \ge v_i(A_j)+p_j$. An allocation $A$ is {\em envy-freeable} if there are payments $p=(p_1,\dots,p_n)$ such that $(A,p)$ is envy-free.

\citet{halpern2019subsidies} first noted that not all allocations are envy-freeable by considering an example with two agents that have values $150$ and $100$ for a single good. 
If the good is allocated to the second agent, then we have to compensate the first agent at least $150$, however, this would make the second agent envious.
\citet{halpern2019subsidies} then gave a characterization of all envy-freeable allocations. Specifically, they showed that allocation $A$ is envy-freeable if and only if $A$ maximizes the utilitarian welfare across all reassignments of its bundles to agents. 

A natural question to ask in this setting is to quantify the minimum total amount of subsidy $\sum_i p_i$ required to find an envy-free allocation. Towards answering this question, \citet{halpern2019subsidies} showed that, if a fixed envy-freeable allocation is given, the minimum total subsidy can be computed in polynomial time. However, finding the envy-freeable allocation with overall minimum total subsidy is an NP-hard problem; this follows since deciding the existence of an envy-free allocation without any subsidy is, as we already discussed, NP-hard~\citep{journals/jair/BouveretL08}. \citet{caragiannis2021subsidies} provided (additive) approximation guarantees and hardness results for computing an envy-freeable allocation that minimizes the total amount of subsidies. 

To bound the minimum subsidy required in the worst case over allocations, \citet{halpern2019subsidies} proved that any envy-freeable allocation requires no more than $(n - 1)m\cdot v^*$ total subsidy, where $v^*$ is the maximum value of an agent for a good. If we are able to choose the allocations, the $(n - 1)$ factor cannot be removed.
Consider an instance with a single good that all agents value as $v^*$; then, a total payment of at least $(n-1)\cdot v^*$ is required to eliminate the envy of all agents, in any allocation. \citet{halpern2019subsidies} conjectured that it is always possible to find an allocation with no more than $(n-1)\cdot v^*$ total subsidy, which was later indeed proved by \citet{brustle2020dollar}. In addition, \citeauthor{brustle2020dollar} proved that it suffices to subsidize each agent by at most $v^*$.

\subsection{Mixtures of Indivisible and Divisible Items}
There has also been recent work on models that involve both indivisible and divisible goods; note that limited subsidy can be thought of as a divisible good. \citet{DBLP:journals/ai/BeiLLLL21} proposed a new fairness notion called \textit{envy-freeness for mixed goods (EFM)}, which is a generalization of both EF and EF1 to the mixed goods setting. 
\newtext{The key idea behind EFM is that (1) an agent $i$'s envy towards another agent $j$ vanishes, if an indivisible good is removed from consideration when $j$ only gets indivisible goods, or (2) agent $i$ does not envy agent $j$ when $j$ gets some divisible goods.}
\citet{DBLP:journals/ai/BeiLLLL21} proved that an EFM allocation always exists for any number of agents with additive valuations. \citet{DBLP:journals/aamas/BeiLLW21} examined the same setting and explored approximations of MMS.
\citet{BhaskarSV21} showed that for a mixed resources model consisting of indivisible items and a divisible, undesirable heterogeneous resource, an EFM allocation always exists.

\subsection{Chores and Mixed Manna} 
Beyond discrete fair division of goods, there is a significant line of work that considers similar questions when items can be seen as \emph{chores} (which are negatively valued by the agents), or \emph{mixed manna} (a mixture of both goods and chores). Here, we give a brief overview of these settings.

The definitions of EF, PROP and MMS remain the same as in Definitions~\ref{def:ef}, \ref{def:prop}, \ref{def:mms}, and \ref{def:mms:app}. As in the case of goods, the existence of EF or PROP allocations is rarely guaranteed. For additive valuations, \citet{AzizRSW17} proved that the Round-Robin algorithm returns a $2$-MMS allocation. This result was later improved to $4/3$ and $11/9$ by \citet{BarmanK20} and \citet{HuangL21}, respectively. 
On the negative side, \citet{FST21} proved that no algorithm can ensure an approximation factor better than $44/43$. 
If the algorithm has access only to the ordinal preferences of the agents, the best approximation of MMS that can be guaranteed is between $7/5$ and $5/3$~\citep{aziz2022approximate}. When the valuations are submodular, \citet{journals/corr/abs-2205-10520} proved that no algorithm guarantees a better than $n$-MMS allocation.
\citet{conf/ijcai/0001C019} considered the case when the agents have asymmetric entitlements and extended the notion of weighted MMS to chores, but the tight approximation ratio is still unknown. 

Relaxations of EF and PROP need to be changed a bit to capture the fact that the agents have negative values for chores. Essentially, instead of removing goods from bundles of other agents to eliminate the envy of an agent, we need to remove chores from the bundle of the agent herself. 

\begin{defn}[EF1 and EFX for chores]
For chores, an allocation $A$ is $\alpha$-EF1 if $v_i(A_i \setminus \{g\}) \ge \alpha \cdot v_i(A_j)$ for any pair of agents $i,j$ and some $g \in A_i$; $A$ is $\alpha$-EFX if the inequality holds for any $g \in A_i$.
\end{defn}

\noindent 
For additive valuations, an EF1 allocation can be easily computed by the Round-Robin algorithm.
However, as noted by \citet{BhaskarSV21}, the allocation returned by the Envy-Cycle Elimination algorithm may not be EF1 if the cycles are resolved arbitrarily. Instead, the top-trading technique (in which each agent only points to the agent she envies the most in the envy graph) must be used to preserve EF1. 
In contrast to the case of goods where EF1 and PO can be satisfied simultaneously, the compatibility of EF1 and PO is still unknown for chores.
\citet{conf/atal/EbadianP022}, \citet{conf/aaai/GargMQ22} and \citet{journals/corr/abs-2205-11363} proved that, for bi-valued instances, EF1+PO allocations always exist and can be found efficiently; this is the only known result so far.
The existence of EFX allocations for chores also remains open, with the only positive result being the computation of $O(n^2)$-EFX allocations due to \citet{ijcai/ZhouW22}.

\begin{defn}
For chores, an allocation $A$ is Prop1 if $v_i(A_i \setminus \{g\}) \ge \alpha \cdot  v_i(M)/n$ for any pair of agents $i,j$ and some $g \in A_i$; $A$ is PropX if the inequality holds for any $g \in A_i$.
\end{defn}

\noindent 
The existence of Prop1 allocations is straightforward as EF1 implies Prop1.
But the good news for Prop1 is that it can always be satisfied together with PO, even if the set of items is a mixture of goods and chores~\citep{aziz2020propx}.
In contrast to goods, where PropX allocations may not exist, for chores, a PropX allocations always exist and can be computed efficiently, even when the agents have asymmetric entitlements~\citep{BoYX}.
However, it is still unknown whether PropX and PO are compatible or not. The relationships among various fairness notions for chores are discussed by \citet{SunCD21}.

The more general case of mixtures of goods and chores has recently also been studied~ \citep{BogomolnaiaMSY17,journals/scw/BogomolnaiaMSY19a,aziz2020propx,AzizCIW22}.
This model is particularly interesting as it includes the setting with non-monotone valuations. 
\citet{AzizCIW22} proved that a double Round-Robin algorithm computes an EF1 allocation for any number of agents, and a generalized adjusted winner algorithm finds an EF1+PO allocation for two agents. 
Recently, \citet{aziz2020propx} and \citet{conf/sigecom/KulkarniMT21} designed algorithms for computing Prop1+PO or approximately MMS+PO allocations, respectively.
It is an intriguing future research direction to study the fair allocation problem under other non-monotonic valuations. 
 
Finally, the fair allocation of divisible chores is also studied in the literature, such as the computation of envy-free allocations \citep{conf/soda/DehghaniFHY18} and competitive equilibria \citep{DBLP:conf/soda/BoodaghiansCM22,conf/soda/ChaudhuryGMM21}.
As such settings are out of the scope of our survey, we refer the reader to these works and references therein.

\subsection{Other Settings}
There are several other variants that we do not discuss in the current survey. Depending on the application, some allocations may not be feasible due to various restrictions, such as connectivity, cardinality, separation, or budget constraints. Such models have recently attracted the attention of the community. Rather than referring to specific works, we point the reader to the survey of \cite{suksompong2021constraints} which discusses this part of the literature in detail. For another example, externalities have been studied in fair division by \citet{journals/scw/SeddighinSG21} and \citet{journals/corr/abs-2110-09066}, where the agents not only have value for what they get but also for what others get.
Yet another example is related to public resource allocation where the items can be shared by the agents ~\citep{conitzer2017prop1,conf/sigecom/FainM018,conf/fsttcs/GargKM21}.  

\section*{Acknowledgements}
This work is partially supported by 
the ERC Advanced Grant 788893 AMDROMA ``Algorithmic and Mechanism Design Research in Online Markets'', 
the MIUR PRIN project ALGADIMAR ``Algorithms, Games, Digital Markets'', 
the NWO Veni project No.~VI.Veni.192.153,
NSFC No. 62102333, 
HKSAR RGC No. PolyU 25211321, 
PolyU Start-up No. P0034420, 
FDCT (File no. 0014/2022/AFJ, 0085/2022/A, 0143/2020/A3, SKL-IOTSC-2021-2023).

\bibliographystyle{plainnat}
\bibliography{references}

\end{document}